\numberwithin{equation}{section}
\newcommand{\1}{1}
\def\d{\,\mathrm{d}}
\def\F{\mathcal{F}}
\def\P{\mathbb{P}}
\def\1{\mathbbm 1}
\newcommand{\W}[1]{W^{(#1)}}
\newcolumntype{Y}{>{\centering\arraybackslash}X}
\newcolumntype{b}{Y}
\newcolumntype{s}{>{\hsize=.5\hsize}Y}
\newtheorem{thm}{Theorem}
\newtheorem{prop}{Proposition}
\newtheorem{remark}{Remark}
\newtheorem{myalg}{Algorithm}
\newcommand\l@subroutine{\@dottedtocline{1}{1.5em}{2.3em}}\makeatother
\newcommand\l@subroutineb{\@dottedtocline{1}{1.5em}{2.3em}}\makeatother
\renewcommand{\hat}{\widehat}
\newcommand{\pdx}[2]{\frac{\partial #1}{\partial #2}}
\newcommand{\pdxx}[2]{\frac{\partial^2 #1}{\partial #2^2}}
\newcommand{\pdxy}[3]{\frac{\partial^2 #1}{\partial #2 \partial #3}}
\begin{document}
\title[Explicit solution simulation method for the 3/2 model]{Explicit solution simulation method\\ for the 3/2 model}

\author[I.R. Kouarfate]{Iro René Kouarfate}
\address{Department of Mathematics\\
	Universit\'{e} du Qu\'{e}bec \`{a} Montr\'{e}al \\
	Montreal, QC\\
	H3C 3P8 Canada}
\email{kouarfate.iro\_rene@courrier.uqam.ca}

\author[M.A. Kouritzin]{Michael A. Kouritzin}
\address{Department of Mathematical and Statistical Sciences\\
	University of Alberta\\
	Edmonton, AB\\
	T6G 2G1 Canada}
\email{michaelk@ualberta.ca}

\author[A. MacKay]{Anne MacKay}
\address{Department of Mathematics\\
	Universit\'{e} du Qu\'{e}bec \`{a} Montr\'{e}al \\
	Montreal, QC\\
	H3C 3P8 Canada}
\email{mackay.anne@uqam.ca}

\thanks{Partial funding in support of this work
	was provided by an FRQNT grant and by NSERC discovery grants.}
\renewcommand{\subjclassname}{\textup{2010} Mathematics Subject Classification}
\keywords{3/2 model, explicit solutions, weak solutions, stochastic volatility, Monte Carlo simulations, option pricing, non-affine volatility}

\begin{abstract}
	An explicit weak solution for the 3/2 stochastic volatility model is obtained and used to develop a simulation algorithm for option pricing purposes. The 3/2 model is a non-affine stochastic volatility model whose variance process is the inverse of a CIR process. This property is exploited here to obtain an explicit weak solution, similarly to \cite{Kouritzin16} for the Heston model. A simulation algorithm based on this solution is proposed and tested via numerical examples. The performance of the resulting pricing algorithm is comparable to that of other popular simulation algorithms.
\end{abstract}

\maketitle
	

\section{Introduction}\label{intro}

Recent work by \cite{Kouritzin16} shows that it is possible to obtain an explicit weak solution for the Heston model, and that this solution can be used to simulate asset prices efficiently. 
Exploiting the form of the weak solution, which naturally leads to importance sampling, \cite{kouritzinbranching} suggest the use of sequential sampling algorithms to reduce the variance of the estimator, inspired by the particle filtering literature. 
Herein, we show that the main results of \cite{Kouritzin16} can easily be adapted to the 3/2 stochastic volatility model and thus be exploited to develop an efficient simulation algorithm that can be used to price exotic options.

The 3/2 model is a non-affine stochastic volatility model whose analytical tractability was studied in \cite{heston1997simple} and  \cite{lewis2000option}. 
A similar process was used in \cite{ahn1999parametric} to model stochastic interest rates.
Non-affine stochastic volatility models have been shown to provide a good fit to empirical market data, sometimes better than some affine volatility models; see \cite{bakshi2006estimation} and the references provided in the literature review section of \cite{zheng2016pricing}.
The 3/2 model in particular is preferred by \cite{carr2007new} as it naturally emerges from consistency requirements in their proposed framework, which models variance swap rates directly. 

As a result of the empirical evidence in its favor, and because of its analytical tractability, the 3/2 model has gained traction in the academic literature over the past decade.
In particular, \cite{itkin2010pricing} price volatility swaps and options on swaps for a class of Levy models with stochastic time change and use the 3/2 model as a particular case.
The 3/2 model also allows for analytical expressions for the prices of different volatility derivatives; see for example \cite{drimus2012options}, \cite{goard2013stochastic} and \cite{yuen2015pricing}.
\cite{chan2015pricing} consider the 3/2 model for pricing long-dated variance swaps under the real world measure.
\cite{zheng2016pricing} obtain a closed-form partial transform of a relevant density and use it to price variance swaps and timer options.
In \cite{grasselli20174}, the 3/2 model is combined with the Heston model to create the new 4/2 model.

For the 3/2 model's growing popularity, there are very few papers that focus on its simulation. 
One of them is \cite{baldeaux2012exact}, who adapts the method of \cite{BroadieKaya:2006} to the 3/2 model and suggests variance reduction techniques. 
The capacity to simulate price and volatility paths from a given market model is necessary in many situations, from pricing exotic derivatives to developing hedging strategies and assessing risk. 
The relatively small size of the literature concerning the simulation of the 3/2 model could be due to its similarity with the Heston model, which allows for easy transfer of the methods developed for the Heston model to the 3/2 one.
Indeed, the 3/2 model is closely linked to the Heston model; the stochastic process governing the variance of the asset price in the 3/2 is the inverse of a square-root process, that is, the inverse of the variance process under Heston. 

This link between the Heston and the 3/2 model motivates the present work; \cite{Kouritzin16} mentions that his method cannot survive the spot volatility reaching 0.
Since the volatility in the 3/2 model is given by the inverse of a ``Heston volatility'' (that is, the inverse of a square-root process), it is necessary to restrict the volatility parameters in such a way that the Feller condition is met, in order to keep the spot volatility from exploding.
In other words, by definition of the 3/2 model, the variance process always satisfies the Feller condition, which makes it perfectly suitable to the application of the explicit weak solution simulation method of \cite{Kouritzin16}.

It is also worthwhile to note that \cite{kouritzinbranching} notice that the resulting simulation algorithm performs better when the Heston parameters keep the variance process further from 0.
It is reasonable to expect that calibrating the 3/2 model to market data give such parameters, since they would keep the variance process (i.e. the inverse of the Heston variance) from reaching very high values.
This insight further motivates our work, in which we adapt the method of \cite{Kouritzin16} to the 3/2 model.

As stated above, many simulation methods for the Heston model can readily be applied to the 3/2 model.
Most of these methods can be divided into two categories; the first type of simulation schemes relies on discretizing the spot variance and the log-price process.
Such methods are typically fast, but the discretization induces a bias which needs to be addressed, see \cite{lord2010comparison} for a good overview.
\cite{BroadieKaya:2006} proposed an exact simulation scheme which relies on transition density of the variance process and an inversion of the Fourier transform of the integrated variance.
While exact, this method is slow, and has thus prompted several authors to propose approximations and modifications to the original algorithm to speed it up (see for example \cite{andersen2007efficient}).
\cite{begin2015simulating} offers a good review of many existing simulation methods for the Heston model.

The simulation scheme proposed by \cite{Kouritzin16} for the Heston model relies on an explicit weak solution for the stochastic differential equation (SDE) describing the Heston model.
This result leads to a simulation and option pricing algorithm which is akin to importance sampling.
Each path is simulated using an artificial probability measure, called the reference measure, under which exact simulation is possible and fast.
The importance sampling price estimator is calculated under the pricing measure by multiplying the appropriate payoff (a function of the simulated asset price and volatility paths) by a likelihood, which weights each payoff proportionally to the likelihood that the associated path generated from the reference measure could have come from the pricing measure.
The likelihood used as a weight in the importance sampling estimator is a deterministic function of the simulated variance process, and is thus easy to compute.
The resulting pricing algorithm has been shown to be fast and to avoid the problems resulting from discretization of the variance process.

In this paper, we develop a similar method for the 3/2 model by first obtaining a weak explicit solution for the two-dimensional SDE.
We use this solution to develop an option price importance estimator, as well as a simulation and option pricing algorithm.
Our numerical experiments show that our new algorithm performs at least as well as other popular algorithms from the literature. 
We find that the parametrization of the model impacts the performance of the algorithm.

The paper is organized as follows. 
Section 2 contains a detailed presentation of the 3/2 model as well as our main result.
Our pricing algorithm is introduced in Section 3, in which we also outline existing simulation techniques, which we use in our numerical experiments.
The results of these experiments are given in Section 4, and Section 5 concludes.

\section{Setting and main results}

We consider a probability space $(\Omega, \F, \P)$, where $\P$ denotes a pre-determined risk-neutral measure\footnote{Since our goal in this work is to develop pricing algorithms, we only consider the risk-neutral measure used for pricing purposes.} for the 3/2 model. The dynamics of the stock price under this chosen risk-neutral measure are represented by a two-dimensional process $(S,V)=\{(S_t,V_t), t\geq 0\}$ satisfying
\begin{align}
\begin{cases}
\d S_t &= rS_t \d t + \sqrt{V_t} S_t \rho \d \W{1}_t +  \sqrt{V_t} S_t \sqrt{1-\rho^2} \d \W{2}_t\\
\d V_t &= \kappa ~V_t (\theta - V_t) \d t + \varepsilon V^{3/2}_t \d \W{1}_t,
\end{cases}
\label{eq:32_model_SV}
\end{align}
with $S_0 = s_0 > 0$ and $V_0 = v_0 >0$, and where $W = \{(\W{1}_t,\W{2}_t), t\geq 0\}$ is a two-dimensional uncorrelated Brownian motion, $r$, $\kappa$, $\theta$ and $\varepsilon$ are constants satisfying $\kappa > -\frac{\varepsilon^2}{2}$, and $\rho \in [-1,1]$. 
The drift parameter $r$ represents the risk-free rate and $\rho$ represents the correlation between the stock price $S$ and its volatility $V$. 

The restriction $\kappa > -\frac{\varepsilon^2}{2}$ imposed on the parameters keeps the variance process from exploding. 
This property becomes clear when studying the process ${U = \{U_t, t \geq 0\}}$ defined by $U_t = \frac 1{V_t}$ for $t \geq 0$. 
Indeed, it follows from It\^{o}'s lemma that
\begin{align*}
\d U_t &= \kappa \theta \left(\frac{\kappa + \varepsilon^2}{\kappa \theta} U_t\right) ~\d t - \varepsilon \sqrt{U_t} \d \W{1}_t\\
&= \tilde \kappa (\tilde \theta - U_t)  ~\d t + \tilde{\varepsilon} \sqrt{U_t} \d \W{1}_t
\end{align*}
where $\tilde \kappa = \kappa \theta$, $\tilde \theta = \frac{\kappa + \varepsilon^2}{\kappa \theta}$ and $\tilde \varepsilon = -\varepsilon$.
In other words, with the restriction ${\kappa > -\frac{\varepsilon^2}{2}}$, $U$ is a square-root process satisfying the Feller condition $\tilde \kappa \tilde \theta > \frac{\tilde \varepsilon^2}{2}$,  so that ${\P(U_t > 0) = 1}$ for all $t \geq 0$. 

In order to use results obtained for the Heston model and adapt them to the 3/2 model, we express \eqref{eq:32_model_SV} in terms of the inverse of the variance process, $U$, as follows
\begin{align}
\begin{cases}
\d S_t &= rS_t \d t + \sqrt{U_t^{-1}} S_t \rho \d \W{1}_t +  \sqrt{U_t^{-1}} S_t \sqrt{1-\rho^2} \d \W{2}_t\\
\d U_t &= \tilde \kappa (\tilde \theta - U_t)  ~\d t + \tilde{\varepsilon} \sqrt{U_t} \d \W{1}_t,
\end{cases}
\label{eq:32_model_SU}
\end{align}
with $S_0=s_0$ and $U_0 = 1/v_0$.

Although $U$ is a square-root process, \eqref{eq:32_model_SU} is of course not equivalent to the Heston model. Indeed, in the Heston model, it is the diffusion term of $S$, rather than its inverse, that follows a square-root process. 
However, the ideas of Kouritzin (2018) can be exploited to obtain an explicit weak solution to \eqref{eq:32_model_SU}, which will in turn be used to simulate the process.

It is well-known (see for example \cite{hanson2010stochastic}) that if  $n \coloneqq \frac{4\tilde \kappa \tilde \theta}{\tilde\varepsilon^2}$ is a positive integer, the square-root process $U$ is equal in distribution to the sum of $n$ squared Ornstein-Uhlenbeck processes.
Proposition \ref{prop:solution_condition_C} below relies on this result.

\begin{prop}\label{prop:solution_condition_C}
	Suppose that $n = \frac{4\tilde \kappa \tilde \theta}{\tilde\varepsilon^2} \in \mathbb N^+$ and let $\W{2}, Z^{(1)}, \ldots, Z^{(n)}$ be independent standard Brownian motions on some probability space $(\Omega, \F, \P)$.
	For $t\geq 0$, define
	\begin{align*}
	S_t &= s_0 \exp\left\{
		\frac{\rho}{\tilde \varepsilon} \log\left(\frac{U_t}{U_0}\right)
		+\left(r+\frac{\rho \tilde \kappa}{\tilde \varepsilon}\right) t
		\right.\\
		&\qquad\qquad\qquad
		\left.
		-\left(\frac{\rho}{\tilde\varepsilon}
		\left(\tilde \kappa \tilde \theta - \tilde \varepsilon^2/2 \right)+\frac 12\right)\int_0^t U^{-1}_s \d s
		+\sqrt{1-\rho^2}\int_0^t \sqrt{U^{-1}_s} \d W^{(2)}_s 
		\right\},\\
	U_t &= \sum_{i=1}^n \left(Y^{(i)}_t\right)^2,\\
	\intertext{where}
	Y^{(i)}_t &= \frac{\tilde\varepsilon}2 \int_0^t e^{-\frac{\tilde \kappa}2(t-u)} \d Z^{(i)}_u + e^{-\frac{\tilde \kappa}{2} t} Y^{(i)}_0, 
	\qquad \text{with } Y_0 = \sqrt{U_0/n}
	\intertext{and}
	\W{1}_t &= \sum_{i=1}^n \int_0^t \frac{Y^{(i)}_u}{\sqrt{\sum_{j=1}^n (Y^{(j)}_u)^2}} \d Z^{(i)}_u. 
	\end{align*}
	Let $X=(S,U)$, $W=(\W{1},\W{2})$ 
	and let $\{\F_t\}_{t\geq 0}$ be the augmented filtration generated by $(\W{2},Z^{(1)},\ldots,Z^{(n)})$. Then,
	\begin{itemize}
		\item
		$\W{1}$ is a standard Brownian motion, and
		\item
		$(X,W)$, $(\Omega, \F,\P)$, $\{\F_t\}_{t\geq 0}$ is a weak solution to \eqref{eq:32_model_SU}.
	\end{itemize}
\end{prop}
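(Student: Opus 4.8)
The plan is to verify the two bullet points by a direct application of It\^{o}'s formula together with L\'{e}vy's characterization theorem, after first recording the positivity and integrability facts that make every object well-defined. Because $n=\frac{4\tilde\kappa\tilde\theta}{\tilde\varepsilon^2}$ is a positive integer and the parameter restriction forces $n>2$, the strict Feller condition $\tilde\kappa\tilde\theta>\tilde\varepsilon^2/2$ holds, which (as already noted in the excerpt) gives $\P(U_t>0)=1$ and, by the standard CIR estimate, $\int_0^t U_s^{-1}\d s<\infty$ a.s. This is exactly what is needed for the integrand $Y^{(i)}_u/\sqrt{\sum_j (Y^{(j)}_u)^2}$ defining $\W{1}$ to be well-defined (its denominator is a.s.\ strictly positive), for the $\d s$-integral in $S_t$ to be finite, and for the stochastic integral $\int_0^t \sqrt{U_s^{-1}}\,\d\W{2}_s$ to be well-posed.

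\textbf{Dynamics of $U$ and definition of $\W{1}$.} The stated formula for $Y^{(i)}$ is the explicit solution of the Ornstein--Uhlenbeck equation $\d Y^{(i)}_t=-\frac{\tilde\kappa}{2}Y^{(i)}_t\,\d t+\frac{\tilde\varepsilon}{2}\,\d Z^{(i)}_t$. Applying It\^{o}'s formula to $(Y^{(i)})^2$, summing over $i$, and using $n\tilde\varepsilon^2/4=\tilde\kappa\tilde\theta$ yields
\[
\d U_t=\tilde\kappa(\tilde\theta-U_t)\,\d t+\tilde\varepsilon\sum_{i=1}^n Y^{(i)}_t\,\d Z^{(i)}_t .
\]
Factoring $\sqrt{U_t}$ out of the martingale part identifies it with $\tilde\varepsilon\sqrt{U_t}\,\d\W{1}_t$ for $\W{1}$ as defined, so $U$ solves the second line of \eqref{eq:32_model_SU}. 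Next I would show $\W{1}$ is a Brownian motion: it is a continuous local martingale started at $0$, and by the independence of the $Z^{(i)}$ its quadratic variation is $\langle\W{1}\rangle_t=\int_0^t \frac{\sum_i (Y^{(i)}_u)^2}{U_u}\,\d u=\int_0^t 1\,\d u=t$, so L\'{e}vy's theorem applies. Since $\W{2}$ is independent of every $Z^{(i)}$, one has $\langle\W{1},\W{2}\rangle\equiv 0$, and the vector form of L\'{e}vy's theorem shows $(\W{1},\W{2})$ is a two-dimensional standard Brownian motion for $\{\F_t\}_{t\ge 0}$.

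\textbf{Dynamics of $S$.} Writing $S_t=s_0 e^{L_t}$ with $L_t$ the bracketed exponent, I would compute $\d\log U_t$ from the $U$-dynamics above (its drift is $(\tilde\kappa\tilde\theta-\tilde\varepsilon^2/2)U_t^{-1}-\tilde\kappa$ and its diffusion is $\tilde\varepsilon\, U_t^{-1/2}\,\d\W{1}_t$), substitute into $\d L_t$, and observe that the $U_t^{-1}$ drift contributions and the constant $\rho\tilde\kappa/\tilde\varepsilon$ terms cancel, leaving
\[
\d L_t=\left(r-\tfrac12 U_t^{-1}\right)\d t+\rho\, U_t^{-1/2}\,\d\W{1}_t+\sqrt{1-\rho^2}\, U_t^{-1/2}\,\d\W{2}_t .
\]
Since $\d\langle L\rangle_t=U_t^{-1}\,\d t$ (again using $\langle\W{1},\W{2}\rangle\equiv 0$), It\^{o}'s formula gives $\d S_t=S_t\,\d L_t+\tfrac12 S_t\,\d\langle L\rangle_t$, in which the $\pm\tfrac12 U_t^{-1}$ terms cancel, recovering exactly the first line of \eqref{eq:32_model_SU}.

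\textbf{Main obstacle.} Once the setup is in place the manipulations are mechanical; the points demanding genuine care are (i) confirming that $\W{1}$ is an honest Brownian motion rather than merely a continuous local martingale, which reduces to the quadratic-variation identity $\langle\W{1}\rangle_t=t$ and hence to the independence of the $Z^{(i)}$, and (ii) justifying all the $U^{-1}$ and $U^{-1/2}$ terms, which is precisely where the strict Feller condition forced by $n\in\mathbb N^+$ is indispensable. The drift cancellation in the $S$-computation is the single place where the exact coefficients prescribed in the exponent must be used.
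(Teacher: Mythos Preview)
Your proposal is correct and follows essentially the same route as the paper: identify the $Y^{(i)}$ as Ornstein--Uhlenbeck processes, apply It\^{o}'s lemma to obtain the CIR dynamics of $U$, invoke L\'{e}vy's characterization for $\W{1}$, and then apply It\^{o}'s lemma to $S_t$. You in fact supply more than the paper does---the observation that the model's parameter restriction forces $n>2$ (hence the strict Feller condition), the a.s.\ finiteness of $\int_0^t U_s^{-1}\,\d s$, the explicit cross-variation argument giving independence of $\W{1}$ and $\W{2}$, and the detailed drift-cancellation in the $S$-computation---all of which the paper either leaves implicit or compresses into the single line ``an application of It\^{o}'s lemma to $S_t$ completes the proof.''
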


\begin{proof}
We first observe that $Y^{(i)}$, $i \in \{1,\ldots,n\}$, are independent Ornstein-Uhlenbeck processes, and that by L\'{e}vy's characterization, $\W{1}$ is a Brownian motion.
It follows from an application of It\^{o}'s lemma that
\begin{align}
dU_t &= \sum_{i=1}^n \left(\frac{\tilde \varepsilon^2}4 - \tilde \kappa (Y^{(i)}_t)^2\right) \d t 
+ \tilde \varepsilon Y^{(i)}_t \d Z^{(i)}_t\nonumber\\
&= \left(\frac {n\tilde\varepsilon^2}4 - \tilde \kappa \sum_{i=1}^n (Y^{(i)}_t)^2\right) \d t + \tilde \varepsilon \sum_{i=1}^n Y^{(i)}_t \d Z^{(i)}_t\nonumber\\
&= \left(\frac {n\tilde\varepsilon^2}4 - \tilde \kappa U_t\right) \d t + \tilde \varepsilon \sqrt{U_t} \d \W{1}_t,\label{eq:dU_n}
\end{align}  	
where the last equality is obtained by multiplying and dividing the second term on the right-hand side by $\sqrt{\sum_{j=1}^n (Y^{(i)}_t)^2}$.
Here, since we work under the assumption that $n = \frac{4 \tilde\kappa \tilde\theta}{\tilde \epsilon^2}$, \eqref{eq:dU_n} can be re-written as 
\begin{align*}
dU_t &= \tilde \kappa\left(\tilde \theta - U_t\right) \d t + \tilde \varepsilon \sqrt{U_t} \d \W{1}_t.
\end{align*}
An application of It\^{o}'s lemma to $S_t$ completes the proof.
\end{proof}

An alternative, systematic way to verify the functional form for $S_t$ that avoids our It\^{o}-lemma-based guess and verify technique can be found in \cite{Kouritzin16}.

It is likely that for a given market calibration of the 3/2 model, $n=\frac{4\tilde\kappa\tilde\theta}{\tilde \epsilon^2}$ is not an integer. 
For this reason, a more general result is needed to develop a simulation algorithm based on an explicit weak solution. 

We generalize the definition of $n$ and let $n = \max\left(\lfloor \frac{4 \tilde \kappa \tilde \theta}{\tilde \varepsilon^2} + \frac 12 \rfloor,1\right)$. We further define $\tilde \theta_n$ by
\begin{align*}
\tilde\theta_n = \frac{n \tilde \varepsilon^2}{4\tilde\kappa}.
\end{align*}
It follows that $\tilde\kappa \tilde\theta_n = \frac{n\tilde\varepsilon^2}{4}$.

While $U$ above cannot hit $0$ under the Feller condition, it can get arbitrarily close, causing $U^{-1}_t$ to blow up. 
To go beyond the case $\frac{4\tilde\kappa\tilde\theta}{\tilde\epsilon^2} \in \mathbb N$ treated in Proposition \ref{prop:solution_condition_C}, we want to change measures, which is facilitated by stopping $U$ from approaching zero.

This change of measure is needed to readjust the distribution of the paths of $U$ simulated using the (wrong) long-term mean parameter $\tilde \theta_n$ and Proposition \ref{prop:solution_condition_C}.
Indeed, Proposition \ref{prop:solution_condition_C} can be used with $\tilde \theta_n$ since $\frac{4\tilde\kappa\tilde\theta_n}{\tilde\epsilon^2} = n$ is an integer.
Under the new measure, the adjusted paths have the correct distribution, that is, the one associated with the desired parameter $\tilde \theta$.
This idea is made more precise below.

Given a filtered probability space $(\Omega,\F,\{\F_t\}_{t\geq 0},\hat\P)$ with independent Brownian motions $Z^{(1)},\ldots,Z^{(n)}$ and $\W{2}$, and a fixed $\delta >0$, we can define ${(\hat S, \hat U) = \{(\hat S_t,\hat U_t)\}_{t\geq0}}$ by
\begin{align}
\hat S_t &= 
s_0 \exp\left\{\frac {\rho}{\tilde \varepsilon}\log (\hat U_t / U_0) 
+ \left(r + \frac{\rho \tilde \kappa}{\tilde \varepsilon}\right)t\right.\nonumber\\
&\qquad\qquad
-\left.\left(\frac {\rho}{\tilde \varepsilon} \left(\tilde\kappa \tilde \theta - \tilde \varepsilon^2/2\right) + \frac 12\right) \int_0^t \hat U^{-1}_s \d s
+ \sqrt{1-\rho^2} \int_0^t \hat U^{-1/2}_s \d \W{2}_s
\right\},\label{eq:hat_S}\\
\hat U_t &= \sum_{i=1}^n (Y^{(i)}_t)^2,\label{eq:hat_U}\\
\intertext{and $\tau_\delta = \inf\{t \geq 0: \hat U_t \leq \delta\}$, where}
Y^{(i)}_t &= \frac{\tilde\varepsilon}2 \int_0^{t \wedge \tau_\delta} e^{-\frac{\tilde \kappa}2(t-u)} \d Z^{(i)}_u + e^{-\frac{\tilde \kappa}{2} (t \wedge \tau_\delta)} Y^{(i)}_0, 
\qquad \text{with } Y_0 = \sqrt{U_0/n}
\label{eq:Y}
\end{align}
for $i \in \{1,\ldots,n\}$.

Theorem \ref{thm:solution}, to follow immediately, shows that it is possible to construct a probability measure on $(\Omega,\F)$ under which $(\hat S, \hat U)$ satisfies \eqref{eq:32_model_SU} until $\hat U$ drops below a pre-determined threshold $\delta$.

\begin{thm}\label{thm:solution}
	Let $(\Omega,\F,\{\F_t\}_{t\geq 0},\hat\P)$ be a filtered probability space on which $Z^{(1)},\ldots,Z^{(n)}, \W{2}$ are independent Brownian motions. Let $(\hat S, \hat U)$ be defined as in \eqref{eq:hat_S} and \eqref{eq:hat_U} and let $\tau_\delta = \inf\{ t\geq 0: \hat U_t \leq \delta\}$ for some $\delta \in (0,1)$. Define
	\begin{align}
	\hat L^{(\delta)}_t &= \exp\left\{
	- \frac{\tilde \kappa (\tilde \theta_n - \tilde \theta)}{\tilde\varepsilon}\int_0^t \hat U_v^{-1/2} \d \hat W^{(1)}_v - \frac {\tilde\kappa^2}2 \left(\frac{\tilde \theta_n - \tilde \theta}{\tilde\varepsilon}\right)^2\int_0^t \hat U^{-1}_v \d v
	\right\}
	\label{eq:hat_L}
	\intertext{with}
	\hat W^{(1)}_t &= \sum_{i=1}^n \int_0^t \frac{Y^{(i)}_u}{\sqrt{\sum_{j=1}^n (Y^{(j)}_u)^2}} \d Z^{(i)}_u
	\label{eq:hat_W1}
	\end{align}
	and $\P^{\delta}(A) = \hat E[1_A \hat L^{(\delta)}_{T}] ~\forall A \in \F_T$ for $T>0$.
	
	
	Then, under the probability measure $\P^{\delta}$, $(\W{1},\W{2})$, where
	\begin{align*}
	\W{1}_t = \hat W^{(1)}_t + \tilde\kappa \frac{\tilde\theta - \tilde \theta_n}{\tilde \varepsilon} \int_0^{t\wedge\tau_\delta} \hat U^{-1/2}_s \d s,
	\end{align*}
	are independent Brownian motions and $(\hat S, \hat U)$ satisfies
	\begin{align}
	\begin{split}
	\d\hat S_t &= 
	\begin{cases}
	r\hat S_t \d t + \hat U_t^{-1/2} \hat S_t \rho \d W^{(1)}_t +  \hat U_t^{-1/2} \hat S_t \sqrt{1-\rho^2} \d W^{(2)}_t, &t \leq \tau_\delta\\
	r_\delta \hat S_t \d t + \sigma_\delta \hat S_t \d W^{(2)}_t,
	&t > \tau_\delta,
	\end{cases}\\
	\d\hat U_t &= 
	\begin{cases}
	\tilde \kappa (\tilde \theta - \hat U_t)  ~\d t + \tilde{\varepsilon} \hat U_t^{1/2} \d W^{(1)}_t, &t \leq \tau_\delta\\
	0, &t > \tau_\delta
	\end{cases}
	\end{split}\label{eq:solution}
	\end{align}
	on $[0,T]$, with
	\begin{align*}
	r_\delta &= r+\frac{\rho}{2\tilde\varepsilon\delta}
	\left(2\tilde\kappa\delta - 2\tilde\kappa\tilde\theta + \tilde\varepsilon^2 - \rho\tilde\varepsilon^2\right),\\
	\sigma_\delta &= \sqrt{\frac{1-\rho^2}{\delta}}.
	\end{align*}
\end{thm}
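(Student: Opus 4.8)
The plan is to read \eqref{eq:solution} as a Girsanov change of measure layered on top of Proposition~\ref{prop:solution_condition_C}. First I would work under $\hat\P$ and record two facts that follow exactly as in the proof of Proposition~\ref{prop:solution_condition_C}: by L\'evy's characterization $\hat W^{(1)}$ is an $\hat\P$-Brownian motion (its integrand has unit Euclidean norm whenever $\hat U>0$), and It\^o's lemma applied to $\hat U=\sum_i (Y^{(i)})^2$ gives, on $[0,\tau_\delta]$, the square-root dynamics $\d\hat U_t=\tilde\kappa(\tilde\theta_n-\hat U_t)\,\d t+\tilde\varepsilon\hat U_t^{1/2}\,\d\hat W^{(1)}_t$, where the long-run mean is the \emph{wrong} value $\tilde\theta_n$ precisely because $4\tilde\kappa\tilde\theta_n/\tilde\varepsilon^2=n\in\mathbb N^+$. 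The whole purpose of the theorem is then to correct $\tilde\theta_n$ into the desired $\tilde\theta$ by reweighting, which is exactly what $\hat L^{(\delta)}$ accomplishes.

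Next I would recognize $\hat L^{(\delta)}$ in \eqref{eq:hat_L} as the Dol\'eans--Dade exponential of the stochastic integral with kernel $\phi_v=\tfrac{\tilde\kappa(\tilde\theta-\tilde\theta_n)}{\tilde\varepsilon}\hat U_v^{-1/2}$ against $\hat W^{(1)}$. The main obstacle is to show that $\hat L^{(\delta)}$ is a genuine $\hat\P$-martingale, not merely a local one, so that $\P^\delta(A)=\hat E[\1_A\hat L^{(\delta)}_T]$ actually defines a probability measure; this is exactly where the flooring at $\delta$ enters. Because $\hat U_v\ge\delta$ for $v\le\tau_\delta$ (and $\hat U$ is frozen at level $\delta$ afterwards), the kernel obeys the deterministic bound $|\phi_v|\le \tfrac{\tilde\kappa|\tilde\theta-\tilde\theta_n|}{\tilde\varepsilon}\delta^{-1/2}$, so Novikov's condition $\hat E[\exp(\tfrac12\int_0^T\phi_v^2\,\d v)]<\infty$ holds trivially on the finite horizon $[0,T]$. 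With the martingale property secured, Girsanov's theorem yields that $\W{1}$, obtained from $\hat W^{(1)}$ by adding back the compensating drift $\int_0^{t\wedge\tau_\delta}\phi_v\,\d v$, is a $\P^\delta$-Brownian motion.

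Since the Girsanov kernel loads only on $\hat W^{(1)}$, the change of measure leaves $\W{2}$ and its independence intact: the cross-variation $\langle\W{1},\W{2}\rangle$ equals $\langle\hat W^{(1)},\W{2}\rangle=0$ (the drift correction is of finite variation), so by L\'evy's characterization $(\W{1},\W{2})$ is a two-dimensional $\P^\delta$-Brownian motion. Substituting the Girsanov relation between $\hat W^{(1)}$ and $\W{1}$ into the $\hat U$-dynamics recorded above shifts the drift from $\tilde\kappa(\tilde\theta_n-\hat U_t)$ to $\tilde\kappa(\tilde\theta-\hat U_t)$ on $[0,\tau_\delta]$, which is the claimed equation for $\hat U$; past $\tau_\delta$ the construction \eqref{eq:Y} freezes the $Y^{(i)}$, so $\hat U\equiv\delta$ and $\d\hat U_t=0$.

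Finally I would verify the equation for $\hat S$ by applying It\^o's lemma to $\log(\hat S_t/s_0)$, treating the two regimes separately. On $\{t\le\tau_\delta\}$ the coefficients in \eqref{eq:hat_S} are chosen so that the $\hat U_t^{-1}\,\d t$ contributions arising from the It\^o correction of $\tfrac{\rho}{\tilde\varepsilon}\log\hat U_t$ and from the explicit compensator cancel against the quadratic-variation term $\tfrac12\,\d\langle\log\hat S\rangle_t$, while the constant drift collapses to $r$; this reproduces $\d\hat S_t=r\hat S_t\,\d t+\hat U_t^{-1/2}\hat S_t(\rho\,\d\W{1}_t+\sqrt{1-\rho^2}\,\d\W{2}_t)$. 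On $\{t>\tau_\delta\}$, $\hat U$ is frozen at $\delta$, so $\log\hat U_t$ is constant and the $\W{1}$-driven term drops out entirely; It\^o's lemma then leaves a constant drift $r_\delta$ together with the single diffusion coefficient $\sigma_\delta=\sqrt{(1-\rho^2)/\delta}$ on $\W{2}$, and collecting the constant terms identifies $r_\delta$. This completes the verification that $(\hat S,\hat U)$ solves \eqref{eq:solution}.
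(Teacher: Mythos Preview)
Your argument is correct and is the natural direct route: verify Novikov via the floor $\hat U\ge\delta$, apply Girsanov to turn $\hat W^{(1)}$ into $W^{(1)}$, observe via L\'evy that $(W^{(1)},W^{(2)})$ is a two-dimensional $\P^\delta$-Brownian motion because the kernel loads only on $\hat W^{(1)}$, and then read off the SDEs by substituting the drift shift into the $\hat\P$-dynamics of $\hat U$ and applying It\^o to $\log\hat S$.

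The paper takes a different, more abstract path. Rather than invoking Girsanov and then verifying the SDE coefficient-by-coefficient, it casts the target dynamics \eqref{eq:solution} as a martingale problem with generator $\mathcal A_t$ acting on the Schwartz class $\mathcal S(\mathbb R^2)$, and shows directly that under $\hat\P$ the process $\hat M_t(f)=\hat L^{(\delta)}_t f(\hat S_t,\hat U_t)-\hat L^{(\delta)}_0 f(\hat S_0,\hat U_0)-\int_0^t \hat L^{(\delta)}_v\mathcal A_v f(\hat S_v,\hat U_v)\,\d v$ is a true martingale (the rapidly-decreasing test functions guarantee the needed integrability). The Ethier--Kurtz criterion then transfers this to the statement that $(\hat S,\hat U)$ solves the $\mathcal A$-martingale problem under $\P^\delta$. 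What the paper's approach buys is a clean, unified treatment of the two regimes $t\le\tau_\delta$ and $t>\tau_\delta$ through a single time-dependent generator, together with explicit control of integrability via the choice of domain; what your approach buys is brevity and transparency, since once Novikov is checked the rest is mechanical It\^o calculus and the piecewise structure is handled case-by-case. Both routes hinge on the same essential observation, namely the deterministic bound on the Girsanov kernel coming from $\hat U_t\ge\delta$.
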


\begin{proof}
	Let $D=\mathcal S(\mathbb R^2)$, the rapidly decreasing functions. They separate points and are closed under multiplication so they separate Borel probability measures (see \cite{blount2010convergence}) and hence are a reasonable martingale problem domain.
	
	To show that $(\hat X, W)$, $(\Omega,\F,\P^{\delta})$, $\{\hat \F_t\}_{t\geq 0}$, with $\hat X = (\hat S, \hat U)$ and $W = (W^{(1)},W^{(2)})$, is a solution to \eqref{eq:solution}, we show that it solves the martingale problem associated with the linear operator 
	\begin{align*}
	\mathcal A_t f(s,u)&=
	\left(rs f_s(s,u) +
	\tilde \kappa (\tilde \theta - u) f_u(s,u) 
	+ \frac 12 s^2 u^{-1} f_{ss}(s,u)
	+ \rho \tilde\varepsilon s f_{su}(s,u)
	\right.\\
	&\qquad\left.
	+ \frac 12 \varepsilon^2 u f_{uu}(s,u)
	\right) \1_{[0,\tau_\delta]}(t)
	+\left(r_\delta s f_s(s,u) + \frac {1-\rho^2}{2\delta^2} f_{ss} s^2\right)\1_{[\tau_\delta,T]}(t)
	\end{align*}		
	where $f_{s} = \pdx{f(s,u)}{s}$, $f_{u} = \pdx{f(s,u)}{u}$, $f_{ss} = \pdxx{f(s,u)}{s}$, $f_{uu} = \pdxx{f(s,u)}{u}$ and $f_{su} = \pdxy{f(s,u)}{s}{u}$.
	That is, we show that for any function $f \in D$, the process
	\begin{align*}
	M_t(f) = f(\hat S_t, \hat U_t) - f(\hat S_0, \hat U_0) - \int_0^t (\mathcal A_s f)(\hat S_v, \hat U_v) \d v,
	\end{align*}
	is a continuous, local martingale. 
	
	First, we note by \eqref{eq:hat_S}, \eqref{eq:hat_U}, \eqref{eq:Y} as well as It\^{o}'s lemma that $(\hat S, \hat U)$ satisfies a two-dimensional SDE similar to the 3/2 model \eqref{eq:32_model_SU}, but with parameters $\kappa$, $\theta_n$, $r_\delta$ and $\hat r_t = r - \frac{\tilde\kappa\rho}{\tilde \varepsilon}(\tilde \theta - \tilde \theta_n)\hat U^{-1}_t$.
	That is, $((\hat S, \hat U),\hat W)$, $(\Omega, \F, \hat \P)$, $\{\hat \F_t\}_{t\geq0}$, where $\{\hat \F_t\}_{t\geq0}$ is the augmented filtration generated by $(Z_1,\ldots,Z_n,\W{2})$, is a solution to
	\begin{align}
	\d \hat S_t &= 
	\begin{cases}
	\hat r_t \hat S_t \d t + \hat U_t^{-1/2} \hat S_t \rho \d \hat W^{(1)}_t +  \hat U_t^{-1/2}\hat S_t \sqrt{1-\rho^2} \d \W{2}_t,
	& t \leq \tau_\delta,\\
	r_\delta \hat S_t \d t + \sigma_\delta \hat S_t\d W^{(2)}_t,
	&t > \tau_\delta,
	\end{cases}\\
	\d\hat U_t &=
	\begin{cases}
	\tilde \kappa (\tilde \theta_n - \hat U_t)  ~\d t + \tilde{\varepsilon} \hat U_t^{1/2} \d \hat W^{(1)}_t, 
	& t \leq \tau_\delta, \\
	0, &t > \tau_\delta.
	\end{cases}
	\label{eq:dSU_hat}
	\end{align}
	with $\hat S_0=s_0$, $\hat U_0 = 1/v_0$ and $\hat W^{(1)}$ defined by \eqref{eq:hat_W1}.
	It follows that for any function $f \in D$,
	\begin{align}
	\begin{split}
	df(\hat S_t, \hat U_t) &= 
	\mathcal L_t f(\hat S_t, \hat U_t) dt\\
	&\qquad +  
	\left( \rho \hat S_t \hat U^{-1/2}_t f_s(\hat S_t,\hat U_t)
	+ \tilde \varepsilon \hat U^{1/2}_t f_u(\hat S_t,\hat U_t) \right) \1_{[0,\tau_\delta]}(t) \d \hat W^{(1)}_t \\
	&\qquad + 
	\left(\hat U^{-1/2}_t \1_{[0,\tau_\delta]}(t) + \delta^{-1/2} \1_{[\tau_\delta,T]}\right) \sqrt{1-\rho^2} \hat S_t f_s(\hat S_t,\hat U_t) \d \W{2}_t,
	\end{split}
	\label{eq:df_SU}
	\end{align}
	where the linear operator $\mathcal L$ is defined by
	\begin{align}
	\begin{split}
	\mathcal L_t f(s,u) &= 
	\left(\hat r_t s f_s(s,u) +
	\tilde \kappa (\tilde \theta_n - u) f_u(s,u) 
	+ \frac 12 s^2 u^{-1} f_{ss}(s,u)
	+ \rho \tilde\varepsilon s f_{su}(s,u)
	\right.\\
	&\qquad\left.
	+ \frac 12 \varepsilon^2 u f_{uu}(s,u)
	\right) \1_{[0,\tau_\delta]}(t)
	+\left(r_\delta s f_s(s,u) + \frac {1-\rho^2}{2\delta^2} f_{ss} s^2\right)\1_{[\tau_\delta,T]}(t).
	\end{split}
	\label{eq:Lf}
	\end{align}
	We observe that $\hat L^{(\delta)}_t$ satisfies the Novikov condition, since by definition of $\hat U_t$, 
	\begin{align*}
		\frac{|\tilde \kappa (\tilde \theta_n - \tilde \theta)|^2}{\tilde \varepsilon^2 \hat U_t} \leq 	\frac{|\tilde \kappa (\tilde \theta_n - \tilde \theta)|^2}{\tilde \varepsilon^2 \delta},
	\end{align*}
	$\hat \P$-a.s. for all $t \geq 0$. 
	It follows that $\hat L^{(\delta)}_t$ is a martingale and that $\P^{\delta}$ is a probability measure.
	
	We also have from \eqref{eq:hat_L} and \eqref{eq:df_SU}  that for $f(s,u) \in C^2([0,\infty]^2)$, 
	\begin{align}
	\left[\hat L^{(\delta)},f(\hat S, \hat U)\right]_t
	&= \int_0^{t \wedge \tau_\delta} \hat L^{(\delta)}_v
	\left( (r - \hat r_v) \hat S_v f_s(\hat S_v, \hat U_v)
	+ \tilde \kappa(\tilde \theta - \tilde \theta_n) f_u(\hat S_v, \hat U_v)
	\right) \d v.
	\label{eq:Lf_qv}
	\end{align}
	 
	Next we define the process $\hat M(f)$ for any $f \in D$ by
	\begin{align}
	\hat M_t(f) &= \hat L^{(\delta)}_t f(\hat S_t, \hat U_t) - \hat L^{(\delta)}_0 f(\hat S_0, \hat U_0) - \int_0^t \hat L^{(\delta)}_v \mathcal A_v f(\hat S_v, \hat U_v) \d v\nonumber\\
	&= \hat L^{(\delta)}_t f(\hat S_t, \hat U_t) - \hat L^{(\delta)}_0 f(\hat S_0, \hat U_0) -\left[\hat L^{(\delta)},f(\hat S, \hat U)\right]_t - \int_0^t \hat L^{(\delta)}_v \mathcal L^{(\delta)}_v f(\hat S_v, \hat U_v) \d v.
	\label{eq:Mhat_dv}
	\end{align}
	Using integration by parts, we obtain 
	\begin{align*}
	\hat M_t(f) &= 
	\int_0^t \hat L^{(\delta)}_v \d f(\hat S_v,\hat U_v) \d v
	+ \int_0^t f(\hat S_v, \hat U_v) \d \hat L^{(\delta)}_v
	-\int_0^t L^{(\delta)}_v \mathcal L^{(\delta)}_v f(\hat S_v, \hat U_v) \d v\\
	&= \int_0^t \hat L^{(\delta)}_v \left[
	\tilde \kappa (\tilde \theta - \tilde \theta_n) \hat U^{-1/2}_v f(\hat S_v, \hat U_v) 
	+ \left(\rho \hat S_v \hat U^{-1/2}_v f_s(\hat S_v, \hat U_v) \right.\right.\\
	&\qquad \left.\left. + \tilde \varepsilon \hat U^{1/2}_v f_u(\hat S_v, \hat U_v)\right) \1_{[0,\tau_\delta]}(v) \right] \d \hat W^{(1)}_v\\
	&\qquad
	+ \int_0^t \hat L^{(\delta)}_v \left(\hat U^{-1/2}_v \1_{[0,\tau_\delta]}(v) + \delta^{-1/2} \1_{[\tau_\delta,T]}(v)\right) \sqrt{1-\rho^2} \hat S_v f_s(\hat S_v,\hat U_v) \d \W{2}_v
	\end{align*}
	so $\hat M_t(f)$ is a local martingale. 
	However, since $f$ is rapidly decreasing, $s f_s(s,u)$, $u f_u(s,u)$, $s f_{su}(s,u)$ and $u f_{uu}(s,u)$ are all bounded.
	We also have that $\hat U_v \geq \delta$ and $\hat L^{(\delta)}_v$ is integrable for all $v$.
	Hence, it follows by \eqref{eq:Lf}, \eqref{eq:Lf_qv}, \eqref{eq:Mhat_dv} and Tonelli that $\hat M(f)$ is a martingale.
	
	To finish the proof, it suffices to follow the remark on p.174 of \cite{ethiermarkov} and show that
	\begin{align}
		E\left[\left(f(\hat S_{t_{n+1}}, \hat U_{t_{n+1}}) - f(\hat S_{t_n}, \hat U_{t_n}) 
			-\int_{t_n}^{t_{n+1}} \mathcal A_vf(\hat S_v, \hat U_v) \d v\right)\prod_{k=1}^n h_k(\hat S_{t_k}, \hat U_{t_k})\right] = 0,
			\label{eq:EKequality}
	\end{align}
	for $0 \leq t_1 < t_2 < \ldots < t_{n+1}$, $f \in D$, $h \in B(\mathbb R^2)$ (the bounded, measurable functions) and where $\hat E[\cdot]$ denotes the $\hat \P$-expectation. To do so, we re-write the left-hand side of \eqref{eq:EKequality} as
	\begin{align*}
	&\hat E\left[\hat L^{(\delta)}_{t_{n+1}} \left(f(\hat S_{t_{n+1}}, \hat U_{t_{n+1}}) - f(\hat S_{t_n}, \hat U_{t_n}) 
	-\int_{t_n}^{t_{n+1}} \mathcal A_vf(\hat S_v, \hat U_v) \d v\right)\prod_{k=1}^n h_k(\hat S_{t_k}, \hat U_{t_k})\right]\\
	&=\hat E\left[ \left(
		\hat L^{(\delta)}_{t_{n+1}} f(\hat S_{t_{n+1}}, \hat U_{t_{n+1}}) 
		- \hat L^{(\delta)}_{t_{n}} f(\hat S_{t_n}, \hat U_{t_n}) 
		-\int_{t_n}^{t_{n+1}} \hat L^{(\delta)}_{v} \mathcal A_vf(\hat S_v, \hat U_v) \d v\right)\prod_{k=1}^n h_k(\hat S_{t_k}, \hat U_{t_k})\right]\\
	&=\hat E\left[\left(\hat M_{t_{n+1}}(f) - \hat M_{t_n}(f)\right)\prod_{k=1}^n h_k(\hat S_{t_k}, \hat U_{t_k})\right],
	\end{align*}
	which is equal to 0 since $\hat M(f)$ is a martingale. We can then conclude that $(\hat S, \hat U)$ solves the martingale problem for $\mathcal A$ with respect to $\hat \P$.
\end{proof}

\begin{remark}
	In Theorem \ref{thm:solution}, we indicate the dependence of the process $\hat L^{(\delta)}$ on the threshold $\delta$ via the superscript. Indeed, $\hat L^{(\delta)}$ depends on $\delta$ through $\hat U$. Going forward, for notational convenience, we drop the superscript, keeping in mind the dependence of the likelihood process on $\delta$.
\end{remark}

\section{Pricing algorithm}

In this section, we show how Theorem \ref{thm:solution} can be exploited to price a financial option in the 3/2 model. 
First, we justify that $(\hat S, \hat U)$ defined in \eqref{eq:hat_S} and \eqref{eq:hat_U} can be used to price an option in the 3/2 model, even if they satisfy \eqref{eq:32_model_SU} only up to $\tau_\delta$. 
We also present an algorithm to simulate paths of $(\hat S,\hat U)$ under the 3/2 model as well as the associated importance sampling estimator for the price of the option.

\subsection{Importance sampling estimator of the option price}

For the rest of this paper, we consider an option with maturity $T \in \mathbb R^+$ whose payoff can depend on the whole path of $\{(S_t,V_t)\}_{t \in [0,T]}$, or equivalently, $\{(S_t,U_t)\}_{t \in [0,T]}$. 
Indeed, since $V_t = U_t^{-1}$ for all $0 \leq t \leq T$ and to simplify exposition, we will keep on working in terms of $U$, the inverse of the variance process, going forward. 
We consider a \emph{payoff function} $\phi_T(S,U)$ with $E[|\phi_T(S,U)|] < \infty$.
We call $\pi_0 = E[\phi_T(S,U)]$ the \emph{price of the option} and the function $\phi_T$, its discounted payoff.
For example, a call option, which pays out the difference between the stock price at maturity, $S_T$, and a pre-determined exercise price $K$ if this difference is positive, has discounted payoff function $e^{-rT}\max(S_T-K,0)$ and price $E[e^{-rT}\max(S_T-K,0)]$.

\begin{remark}
	We work on a finite time horizon and the option payoff function $\phi_T$ only depends on $(S,U)$ up to $T$. We use the index $T$ to indicate this restriction on $(S,U)$.
\end{remark}

The next proposition shows that it is possible to use $(\hat S, \hat U)$, rather than $(S,U)$, to price an option in the 3/2 model.

\begin{prop}\label{prop:convergence}
	Suppose $(S,U)$ is a solution to the 3/2 model \eqref{eq:32_model_SU} on probaiblity space $(\Omega, \mathcal F, \P)$ and $\tau_\delta = \inf\{t \geq 0: U_t \leq \delta\}$. Define  $(\hat S, \hat U)$ by \eqref{eq:hat_S} and \eqref{eq:hat_U}, set $\hat \tau_\delta = \inf\{t \geq 0: \hat U_t \leq \delta\}$ for $\delta \in (0,1)$ and let $\phi_T(S,U)$ be a payoff function satisfying $E[|\phi_T(S,U)|]<\infty$. Then, 
	\begin{align*}
	\lim_{n\rightarrow \infty} E^{1/n}[\phi_T(\hat S, \hat U) \1_{\{\tau_{1/n}>T\}}] = E[\phi_T(S,U)],
	\end{align*}
	where $E^\delta[\cdot]$ denotes the expectation under the measure $\P^\delta$ defined in Theorem \ref{thm:solution}.
\end{prop}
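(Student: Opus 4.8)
The plan is to split the claim into a fixed-$\delta$ distributional identity and a $\delta\downarrow 0$ limit. Concretely, writing $\delta=1/n$, I would first show that for each fixed $\delta\in(0,1)$
\begin{align*}
E^{\delta}\!\left[\phi_T(\hat S,\hat U)\,\1_{\{\hat\tau_\delta>T\}}\right]
= E\!\left[\phi_T(S,U)\,\1_{\{\tau_\delta>T\}}\right],
\end{align*}
and then let $\delta=1/n\to 0$ on the right-hand side. The first identity is where the content of Theorem \ref{thm:solution} enters; the second is a routine dominated-convergence argument powered by the Feller condition recorded in the setup.

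For the identity, I would argue at the level of laws of stopped processes. By Theorem \ref{thm:solution}, under $\P^{\delta}$ the pair $(\hat S,\hat U)$ obeys the genuine 3/2 dynamics \eqref{eq:32_model_SU} with the \emph{correct} long-run parameter $\tilde\theta$ on $[0,\tau_\delta]$; the altered post-$\tau_\delta$ regime is irrelevant on $\{\hat\tau_\delta>T\}$, since there the whole path on $[0,T]$ precedes the stopping time. Stopping both $U$-components at their first passage to level $\delta$, the law of $(\hat U^{\hat\tau_\delta},\W{2})$ under $\P^{\delta}$ and of $(U^{\tau_\delta},\W{2})$ under $\P$ coincide: both are CIR diffusions with identical coefficients $(\tilde\kappa,\tilde\theta,\tilde\varepsilon)$ stopped at the same level, $\W{2}$ is in each case an independent Brownian motion, and the CIR martingale problem is well posed (weak uniqueness). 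The decisive point is that the explicit formula \eqref{eq:hat_S} for $\hat S$ and the formula for $S$ in Proposition \ref{prop:solution_condition_C} are the \emph{same} measurable functional of the $U$-path and of $\W{2}$; hence $(\hat S^{\hat\tau_\delta},\hat U^{\hat\tau_\delta})$ and $(S^{\tau_\delta},U^{\tau_\delta})$ share a law. Applying this equality of laws to the functional $\phi_T(\cdot)\,\1_{\{\text{first passage to }\delta\,>\,T\}}$ — which on the relevant event depends only on the path strictly before the stopping time — yields the identity, and integrability transfers from $E[|\phi_T(S,U)|]<\infty$ because the laws agree.

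For the limit, I would invoke the Feller condition $\tilde\kappa\tilde\theta>\tilde\varepsilon^2/2$ established in the setup, which gives $U_t>0$ for all $t$ a.s.; as $U$ has continuous paths, $\inf_{t\in[0,T]}U_t>0$ a.s. Consequently $\tau_\delta>T$ for all sufficiently small $\delta$ (depending on $\omega$), so $\1_{\{\tau_{1/n}>T\}}\uparrow 1$ a.s. as $n\to\infty$. Since $|\phi_T(S,U)\,\1_{\{\tau_{1/n}>T\}}|\le|\phi_T(S,U)|\in L^1(\P)$, dominated convergence gives $E[\phi_T(S,U)\,\1_{\{\tau_{1/n}>T\}}]\to E[\phi_T(S,U)]$, and chaining with the identity at $\delta=1/n$ finishes the proof. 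The main obstacle is the first step: making precise that the modified dynamics past $\tau_\delta$ play no role and that weak uniqueness of the stopped CIR problem is genuinely available, so that the two stopped laws can be identified; once that is in hand the remainder is standard.
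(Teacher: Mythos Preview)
Your proposal is correct and follows essentially the same two-step route as the paper: first the fixed-$\delta$ identity $E^{\delta}[\phi_T(\hat S,\hat U)\1_{\{\hat\tau_\delta>T\}}]=E[\phi_T(S,U)\1_{\{\tau_\delta>T\}}]$ from Theorem~\ref{thm:solution}, then dominated convergence as $\delta=1/n\to 0$ using the Feller condition. The paper's argument is terser, asserting the identity directly from Theorem~\ref{thm:solution} without spelling out the weak-uniqueness justification you provide; your added detail on equality of stopped laws is a reasonable elaboration of what the paper leaves implicit.
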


\begin{proof}
	By Theorem \ref{thm:solution}, $(\hat S, \hat U)$ satisfies \eqref{eq:32_model_SU} on $[0,\tau_{1/n}]$ under the measure $\P^{1/n}$. It follows that
	\begin{align*}
	E^{1/n}[\phi_T(\hat S, \hat U) \1_{\{\hat \tau_{1/n}>T\}}] = E[\phi_T(S, U) \1_{\{\tau_{1/n}>T\}}].
	\end{align*}
	Because $U$ satisfies the Feller condition, $\lim_{n\rightarrow \infty} \1_{\{\tau_{1/n} \leq T\}} = 0, ~\P$-a.s. and
	\begin{align*}
	\lim_{n\rightarrow \infty} E^{1/n}[\phi_T(\hat S, \hat U) \1_{\{\hat \tau_{1/n}>T\}}]
	= \lim_{n\rightarrow \infty} E[\phi_T(S, U) \1_{\{\tau_{1/n}>T\}}]
	= E[\phi_T(S,U)]
	\end{align*}
	by the dominated convergence theorem.
	
\end{proof}

We interpret Proposition \ref{prop:convergence} in the following manner: by choosing $\delta$ small enough, it is possible to approximate $\pi_0$ by $\pi^{(\delta)}_0 \coloneqq E^{\delta}[\phi_T(\hat S, \hat U) \1_{\{\tau_{\delta}>T\}}]$, that is, using $(\hat S, \hat U)$ rather than $(S,U)$.
The advantage of estimating the price of an option via $(\hat S, \hat U)$ is that the trajectories can easily be simulated exactly under the reference measure $\hat \P$ defined in Theorem \ref{thm:solution}.
In practice, we will show in Section \ref{sec:NumericalResults} that for reasonable 3/2 model calibrations, it is usually possible to find $\delta$ small enough that $E^{\delta}[\phi_T(\hat S, \hat U) \1_{\{\tau_{\delta}>T\}}]$ is almost undistinguishable from $\pi_0$.

In the rest of this section, we explain how $\pi^{(\delta)}_0$ can be approximated with Monte Carlo simulation.
As mentioned above, paths of $(\hat S, \hat U)$ are easily simulated under the reference measure $\hat \P$, not under $\P^{\delta}$. 
It is therefore necessary to express $\pi^{(\delta)}_0$ using Theorem \ref{thm:solution} in the following manner
\begin{align}
\pi^{(\delta)}_0 
= E^{\delta}[\phi_T(\hat S, \hat U) \1_{\{\tau_{\delta}>T\}}] 
= \hat E[\hat L_T~ \phi_T(\hat S, \hat U) \1_{\{\tau_{\delta}>T\}}].
\label{eq:pi_delta}
\end{align}

From \eqref{eq:pi_delta} and the strong law of large numbers, we can define $\hat \pi_0^{(\delta)}$, an importance estimator for $\pi_0^{(\delta)}$, by
\begin{align}
	\hat \pi_0^{(\delta)} = \frac{\sum_{j=1}^N \phi_T(\hat S^{(j)}, \hat U^{(j)}) \hat L^{(j)}_T \1_{\{\tau^{(j)}_\delta > T\}}}{\sum_{j=1}^N \hat L^{(j)}_T},
	\label{eq:hat_pi}
\end{align}

where $\left\{\hat S^{(j)}, \hat U^{(j)}, \hat L^{(j)}\right\}_{j=1}^N$ are $N \in \mathbb N$ simulated paths of $(\hat S, \hat U, \hat L)$.

\subsection{Simulating sample paths}

In light of Proposition \ref{prop:convergence}, we now focus on the simulation of $(\hat S_t, \hat U_t, \hat L_t)_{t \leq \tau_\delta}$.
Using \eqref{eq:hat_S} and \eqref{eq:Y}, $\hat S$ and $Y$ can easily be discretized for simulation purposes. 
To simplify the simulation of the process $\hat L$, we write \eqref{eq:hat_L} as a deterministic function of $\hat U$ in Proposition \ref{prop:Lt} below.

\begin{prop}\label{prop:Lt}
	Let $\hat L_t$ be defined as in Theorem \ref{thm:solution}, with $\hat U$ defined by \eqref{eq:hat_U}. Then, for $t \leq \tau_\delta$, $\hat L_t$ can be written as
	\begin{align}
	\hat L_t = \exp\left\{
	\frac{-(\tilde\kappa \tilde\theta_n - \tilde\kappa \tilde\theta)}{\tilde \varepsilon^2} 
	\left[
	\log(\hat U_t / \hat U_0)
	+\tilde \kappa t
	+\frac{\tilde\kappa \tilde\theta - 3\tilde\kappa \tilde\theta_n +\tilde\varepsilon^2}{2}
	\int_0^t \hat U_s^{-1}~\d s
	\right]
	\right\}.
	\label{eq:hat_L_alt}
	\end{align}
\end{prop}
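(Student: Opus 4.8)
The plan is to observe that \eqref{eq:hat_L} is a stochastic exponential whose only genuinely random ingredient is the It\^o integral $\int_0^t \hat U_v^{-1/2}\d\hat W^{(1)}_v$, and to eliminate that integral in favour of the state variable $\hat U_t$ by applying It\^o's lemma to $\log \hat U_t$. Because the statement restricts attention to $t\le\tau_\delta$, on this range $\hat U$ stays at or above $\delta>0$, so $\log\hat U$ and $\hat U^{-1/2}$ are smooth functions of $\hat U$ and every manipulation below is licit; moreover, on $[0,\tau_\delta]$ the pair $(\hat U,\hat W^{(1)})$ obeys the (unstopped) square-root dynamics $\d\hat U_t=\tilde\kappa(\tilde\theta_n-\hat U_t)\d t+\tilde\varepsilon\hat U_t^{1/2}\d\hat W^{(1)}_t$ recorded in \eqref{eq:dSU_hat}. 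I would therefore not need any new probabilistic input: the martingale and measure-change content is already supplied by Theorem \ref{thm:solution}, and what remains is a deterministic rewriting.

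First I would apply It\^o's lemma to $\log\hat U_t$ using these $\hat\P$-dynamics, crucially with the reference mean-reversion level $\tilde\theta_n$ (not $\tilde\theta$), since \eqref{eq:hat_L} is written through the reference Brownian motion $\hat W^{(1)}$. The quadratic-variation term contributes $-\tfrac{\tilde\varepsilon^2}{2}\hat U_s^{-1}$, yielding $\log(\hat U_t/\hat U_0)=\int_0^t\big[(\tilde\kappa\tilde\theta_n-\tfrac{\tilde\varepsilon^2}{2})\hat U_s^{-1}-\tilde\kappa\big]\d s+\tilde\varepsilon\int_0^t\hat U_s^{-1/2}\d\hat W^{(1)}_s$. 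Solving this identity for the stochastic integral expresses $\int_0^t\hat U_s^{-1/2}\d\hat W^{(1)}_s$ as a linear combination of $\log(\hat U_t/\hat U_0)$, $t$, and the Lebesgue integral $\int_0^t\hat U_s^{-1}\d s$.

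Next I would substitute this expression into the exponent of \eqref{eq:hat_L}. The $\log(\hat U_t/\hat U_0)$ and $t$ terms immediately reproduce the $\log(\hat U_t/\hat U_0)$ and $\tilde\kappa t$ pieces of \eqref{eq:hat_L_alt} with the common prefactor $-\tilde\kappa(\tilde\theta_n-\tilde\theta)/\tilde\varepsilon^2$. The remaining work is to collect the two contributions to the coefficient of $\int_0^t\hat U_s^{-1}\d s$: one from the drift correction generated when inverting the $\log\hat U$ identity, and one from the compensator term $-\tfrac12\big(\tfrac{\tilde\kappa(\tilde\theta_n-\tilde\theta)}{\tilde\varepsilon}\big)^2\int_0^t\hat U_s^{-1}\d s$ already present in \eqref{eq:hat_L}. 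Factoring the common prefactor out of the sum should leave the bracketed coefficient claimed in \eqref{eq:hat_L_alt}, up to verifying the precise combination of the $\tilde\kappa\tilde\theta$ and $\tilde\kappa\tilde\theta_n$ terms.

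I expect this last algebraic consolidation to be the main obstacle: it is pure bookkeeping, but the signs and the factors of $\tfrac12$ are delicate, and one must keep straight that the measure-change amplitude carries $\tilde\theta_n-\tilde\theta$ while the drift inside the $\log\hat U$ expansion carries $\tilde\theta_n$ alone. To guard against a slip I would first compute the coefficient of $\int_0^t\hat U_s^{-1}\d s$ in the raw exponent and only afterwards pull out $-\tilde\kappa(\tilde\theta_n-\tilde\theta)/\tilde\varepsilon^2$, since a sign or a mislabelled $\tilde\theta$ versus $\tilde\theta_n$ is by far the likeliest source of error in matching the stated bracket exactly.
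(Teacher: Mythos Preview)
Your proposal is correct and follows precisely the paper's own argument: apply It\^o's lemma to $\log\hat U_t$ under the $\hat\P$-dynamics with mean-reversion level $\tilde\theta_n$, solve for $\int_0^t \hat U_s^{-1/2}\,\d\hat W^{(1)}_s$, and substitute into \eqref{eq:hat_L}. The paper's proof is terser about the algebraic consolidation you flag at the end, but the method is identical.
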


\begin{proof}
	An application of It\^o's lemma to $\log \hat U_t$ for $t \leq \tau_\delta$ yields
	\begin{align}
	\log(\hat U_t / \hat U_0) = (\tilde\kappa \tilde\theta_n - \tilde \varepsilon^2/2) \int_0^t \hat U_s^{-1} \d s 
	-\tilde\kappa t
	+ \tilde \varepsilon \int_0^t \hat U_s^{-1/2} \d \hat W^{(1)}_s.
	\label{eq:proof_Lt}
	\end{align}
	Isolating $\int_0^t \hat U_s^{-1/2} \d \hat W^{(1)}_s$ in \eqref{eq:proof_Lt} and replacing the resulting expression in \eqref{eq:hat_L} gives the result.	
	
\end{proof}

For $t \in [0,T)$ and $h \in (0,T-t)$, for simulation purposes, we can re-write \eqref{eq:hat_S}, \eqref{eq:Y} and \eqref{eq:hat_L_alt} in a recursive manner as
\begin{align}
\hat S_{t+h} &= \hat S_t \exp\left\{
	\frac{\rho}{\tilde\varepsilon} \log(\hat U_{t+h}/ \hat U_t)
	+ ah\nonumber
	- b \int_t^{t+h} \hat U^{-1}_s \d s\right.\nonumber\\
	&\qquad\qquad\qquad\qquad\qquad\qquad\left. 
	+\sqrt{1-\rho^2} \int_t^{t+h} \hat U_s^{-1/2} \d W^{(2)}_s,
\right\}\label{eq:S_th}\\
Y^{(i)}_{t+h} &= Y^{(i)}_t e^{-\frac{\tilde\kappa}{2}h} + \frac{\tilde\varepsilon}{2}\int_t^{(t+h)\wedge \tau_\delta} e^{-\frac{\tilde\kappa}{2}(t+h-u)} \d Z_u, 
\qquad \text{for }i = 1,\ldots,n,\label{eq:Y_th}\\
\intertext{and}
\hat L_{(t+h)\wedge \tau_\delta} &= L_t \exp\left\{c\left(\log(\hat U_{(t+h)\wedge \tau_\delta}/ \hat U_t) + \tilde \kappa (h\vee (\tau_\delta-t))\right.\right. \nonumber\\
&\qquad\qquad\qquad\qquad\qquad\qquad\qquad\qquad \left.\left.+ d \int_t^{(t+h)\wedge \tau_\delta} \hat U^{-1}_s \d s\right)\right\},
\label{eq:L_th}
\end{align}
where
\begin{align*}
a = r + \frac{\rho \tilde\kappa}{\tilde \varepsilon} \qquad
b = \frac{\rho}{\tilde\varepsilon}(\tilde\kappa \tilde\theta - \tilde \varepsilon^2/2) \qquad
c = -\frac{\tilde\kappa\tilde\theta_n - \tilde\kappa\tilde\theta}{\varepsilon^2} \qquad
d = \frac{\tilde\kappa\tilde\theta - 3\tilde\kappa\tilde\theta_n + \tilde \varepsilon^2}{2}.
\end{align*}

We now discuss the simulation of $(\hat S_{t+h}, \hat U_{t+h}, \hat L_{t+h})$ given $(\hat S_{t}, \hat U_{t}, \hat L_{t})$, as well as $\{Y^{(i)}_t\}_{i=1}^n$.
Typically, $h$ will be a small time interval, that is, we consider $h \ll T$.
It is easy to see from the above that given $Y^{(i)}_t$, $Y^{(i)}_{t+h}$ follows a Normal distribution with mean $Y^{(i)}_t e^{-\frac{\tilde\kappa}{2}h}$ and variance $\frac{\tilde \varepsilon^2}{4\tilde\kappa} (1-e^{-h\tilde\kappa})$. 
The simulation of $Y^{(i)}_{t+h}$ given $Y^{(i)}_t$ is thus straightforward. 
$\hat U_{t+h}$ can then be obtained by \eqref{eq:hat_U} as the sum of the squares of each $Y^{(i)}_{t+h}$, for $i = 1, \ldots, n$.

Given simulated values $\hat U_{t+h}$ and $\hat U_{t}$, the term $\int_t^{t+h} \hat U^{-1}_s \d s$, which appears in both $\hat S_{t+h}$ and $\hat L_{t+h}$, can be approximated using the trapezoidal rule by letting
\begin{align}
	\int_t^{t+h} \hat U^{-1}_s \d s \approx 
	\frac{(\hat U^{-1}_{t} + \hat U^{-1}_{t+h})}{2} h.
	\label{eq:int_U}
\end{align}
More precise approximations to this integral can be obtained by simulating intermediate values $\hat U^{-1}_{t+ih}$ for $i \in (0,1)$ and using other quadrature rules.
In \cite{Kouritzin16} and \cite{kouritzinbranching}, Simpson's $\frac{1}{3}$ rule was preferred.
In this section, we use a trapezoidal rule only to simplify the exposition of the simulation algorithm.

Given that $\hat U_{t+h} > \delta$ and once an approximation for the deterministic integral $\int_t^{t+h} \hat U^{-1}_s \d s$ is calculated, $\hat L_{t+h}$ can be simulated using \eqref{eq:L_th}.
To generate a value for $\hat S_{t+h}$, it suffices to observe that conditionally on $\{\hat U_s\}_{s \in [t,t+h]}$, $\int_t^{t+h} \hat U_s^{-1/2}~\d W^{(2)}_s$ follows a Normal distribution with mean 0 and variance $\int_t^{t+h} \hat U^{-1}_s \d s$.

The resulting algorithm produces $N$ paths of $(\hat S, \hat U, \hat L)$ and the stopping times $\tau_\delta$ associated with each path; it is presented in Algorithm \ref{algo:simulation}, in the appendix.
These simulated values are then used in \eqref{eq:hat_pi} to obtain an estimate for the price of an option.

\section{Numerical experiment}\label{sec:NumericalResults}

\subsection{Methods and parameters}

In this section, we assess the performance of the pricing algorithm derived from Theorem \ref{thm:solution}.
To do so, we use Monte Carlo simulations to estimate the price of European call options.
These Monte Carlo estimates are compared with the exact price of the option, calculated with the analytical expression available for vanilla options in the 3/2 model (see for example \cite{lewis2000option} and \cite{carr2007new}).
More precisely, we consider the discounted payoff function $\phi_T(S,U) = e^{-rT}\max(S_T-K,0)$ for $K>0$ representing the exercise price of the option and we compute the price estimate according to \eqref{eq:hat_pi}.

The precision of the simulation algorithm is assessed using either the mean square error or the relative mean square error, as indicated. We define the mean square error by
\begin{align*}
MSE = E[(\pi_0 - \hat \pi_0^{(\delta)})^2]
\end{align*}
and the relative mean square error by
\begin{align*}
RelMSE = \frac{E[(\pi_0 - \hat \pi_0^{(\delta)})^2]}{\pi_0},
\end{align*}
where $\pi_0$ is the exact price of the option and $\hat \pi_0^{(\delta)}$ is the estimate calculated with \eqref{eq:hat_pi}.
The expectations above are approximated by calculating the estimates a large number of times and taking the mean over all runs.

Throughout this section, we consider the five parameter sets presented in Table \ref{tab:parameters}. 
Parameter set 1 (PS1) was used in \cite{baldeaux2012exact}.
Parameter set 2 (PS2) was obtained by \cite{drimus2012options} via the simultaneous fit of the 3/2 model to 3-month and 6-month S\&P500 implied volatilities on July 31, 2009.
The three other parameter sets are modifications of PS2: PS3 was chosen so that $\frac{4\tilde\kappa\tilde\theta}{\tilde\varepsilon^2} \in \mathbb N$, and PS4 and PS5 were selected to have a higher $n$.
Recalling that $n = \max\left(\lfloor \frac{4 \tilde \kappa \tilde \theta}{\tilde \varepsilon^2} + \frac 12 \rfloor,1\right)$ represents the number of Ornstein-Uhlenbeck processes necessary to simulate the variance process, we have that $n=204$ for PS1, $n=5$ for PS2 and PS3 and $n=12$ for PS4 and PS5.

Throughout the numerical experiments, the threshold we use is $\delta = 10^{-5}$.
For all parameter sets, the simulated process $U$ never crossed below this threshold. 
Therefore, any $\delta$ below $10^{-5}$ would have yielded the same results.

\begin{table}[H]
\begin{center}
	\caption{Parameter sets}
	\label{tab:parameters}
\begin{tabular}{ccccccccc}
\hline
& $S_0$ & $V_0$ & $\kappa$ & $\theta$ &	$\varepsilon$ & $\rho$ & $r$ & $\boldsymbol{4\tilde\kappa\tilde\theta / \tilde\varepsilon^2}$\\	
\hline
PS1 & 1 & 1 & 2 & $1.5$ & $0.2$ & $-0.5$ & $0.05$ & $\mathbf{204}$ \\
PS2 & 100 & $0.06$ & $22.84$ & $0.218$ & $8.56$ & $-0.99$ & $0.00$ & $\mathbf{5.25}$\\
PS3 & 100 & $0.06$ & $18.32$ & $0.218$ & $8.56$ & $-0.99$ & $0.00$ &$\mathbf{5.00}$\\
PS4 & 100 & $0.06$ & $19.76$ & $0.218$ & $3.20$ & $-0.99$ & $0.00$ &$\mathbf{11.72}$\\
PS5 & 100 & $0.06$ & $20.48$ & $0.218$ & $3.20$ & $-0.99$ & $0.00$ &$\mathbf{12.00}$\\
\hline
\end{tabular}
\end{center}
\end{table}

\subsection{Results}

In this section, we present the results of our numerical experiments. We first test the sensitivity of our simulation algorithm to $n$, the number of Ornstein-Uhlenbeck processes to simulate.
We then compare the performance of our algorithm to other popular ones in the literature.

\subsubsection{Sensitivity to $n$}

We first test the impact of $n$ on the precision of the algorithm. 
Such an impact was observed in \cite{kouritzinbranching} in the context of the Heston model.
To verify whether this also holds for the 3/2 model, we consider the first three parameter sets and price at-the-money (that is, $K=S_0$) European call options.
For PS1, we follow \cite{baldeaux2012exact} and compute the price of a call option with maturity $T=1$.
The exact price of this option is 0.4431.
PS2 and PS3 are used to obtain the price of at-the-money call options with $T=0.5$, with respective exact prices 7.3864 and 7.0422.
In all three cases, the length of the time step used for simulation is $h=0.02$.

Here, we assess the precision of the algorithm using the relative MSE in order to compare all three parameter sets, which yield vastly different prices.
The relative quadratic error is approximated by computing the price estimators 20 times, for $N \in \{5000,10000,50000\}$ simulations.
The integral with respect to time (see step (3) of Algorithm \ref{algo:simulation}) is approximated using $M \in \{2,4\}$ sub-intervals and Simpson's $\frac 13$ rule.

The results of Table \ref{tab:RMSE123} show that the precision of the simulation algorithm seem to be affected by $n$.
Indeed, as a percentage of the exact price, the MSE of the price estimator is higher for PS1 than for the other parameter sets.
This observation becomes clearer as $N$ increases.

We recall that for PS3, $\frac{4\tilde\kappa\tilde\theta}{\tilde\varepsilon^2}$ is an integer, while this is not the case for PS2. 
It follows that for this latter parameter set, the weights $\hat L^{(j)}_T$ are all different, while they are all equal to 1 for PS3.
One could expect the estimator using uneven weights to show a worse performance due to the possible great variance of the weights.
However, in this case, both estimators show similar a performance; the algorithm does not seem to be affected by the use of uneven weights.

Finally, Table \ref{tab:RMSE123} shows that increasing $M$ may not significantly improve the precision of the price estimator.
Such an observation is important, since adding subintervals in the calculation of the time-integral slows down the algorithm. 
Keeping the number of subintervals low reduces computational complexity of our algorithm, making it more attractive.

\begin{table}[H]
	\begin{center}
		\caption{Relative MSE as a percentage of $\pi_0$.}
		\label{tab:RMSE123}
		\begin{tabular}{ccccccc}
			\hline 
			$N$ & \multicolumn{2}{c}{PS1} & \multicolumn{2}{c}{PS2} & \multicolumn{2}{c}{PS3} \\ 
			\cline{2-7}      
			~ & $M=2$ & $M=4$ & $M=2$ & $M=4$ & $M=2$ & $M=4$\\ 
			\hline          
			5000  & 0.271 & 0.316 & 0.183 & 0.225 & 0.239 & 0.214\\  
			10000 & 0.203 & 0.158 & 0.111 & 0.112 & 0.172 & 0.143\\	
			50000 & 0.158 & 0.135 & 0.085 & 0.083 & 0.067 & 0.070\\
			\hline
		\end{tabular}
	\end{center}
\end{table}

\subsubsection{Comparison to other algorithms}

In this section, we compare the performance of our new simulation algorithm for the 3/2 model to existing ones.
The first benchmark algorithm we consider is based on a Milstein-type discretization of the log-price and variance process.
The second one is based on the quadratic exponential scheme proposed by \cite{andersen2007efficient} as a modification to the method of \cite{BroadieKaya:2006}, which we adapted to the 3/2 model.
These algorithms are outlined in the appendix.

To assess the relative performance of the algorithms, we price in-the-money ($K/S_0 = 0.95$), at-the-money ($K/S_0 = 1$) and out-of-the-money ($K/S_0 = 1.05$) call options with $T=1$ year to maturity.
The exact prices of the options, which are used to calculate the MSE of the price estimates, are given in Table \ref{tab:prices}.
We consider all parameter sets with the exception of PS1, since this parametrization requires the simulation of 204 Ornstein-Uhlenbeck process, which makes our algorithm excessively slow. 
Run times for the calculation of the Monte Carlo estimators using $N=50,000$ simulations and $M=2$ subintervals are reported in Table 
\ref{tab:times}.

\begin{table}[H]
	\begin{center}
		\caption{Exact prices $\pi_0$ of European call options.}
		\label{tab:prices}
		\begin{tabular}{ccccc}
			\hline
			$K/S_0$   & $PS2$  & $PS3$ & $ PS4$  & $PS5$ \\
			\hline
			0.95     & 10.364 & 10.055 & 11.657 & 11.724    \\
			1        & 7.386 & 7.042 & 8.926 & 8.999    \\
			1.05     & 4.938 & 4.586 & 6.636 & 6.710    \\
			\hline      
		\end{tabular}
	\end{center}
\end{table} 

Figures \ref{fig:MSE24} and \ref{fig:MSE35} present the relative MSE of the price estimator as a function of the number of simulations. 
We note that the parametrizations considered in Figure \ref{fig:MSE24} are such that $\frac{4\tilde\kappa\tilde\theta}{\tilde\varepsilon^2} \notin \mathbb N$, while the opposite is true for Figure \ref{fig:MSE35}.

Overall, the precision of our weighted simulation algorithm is similar to that of the other two algorithms studied.
However, certain parameter sets result in more precise estimates.
Figure \ref{fig:MSE24} shows that the MSE is consistently larger with the weighted simulation algorithms than with the benchmark ones for PS2.
However, with PS4, the weighted algorithm performs as well as the other two algorithms, or better. 
We note that for PS2, $n=5$ while for PS4, $n=12$. 
It was observed in \cite{kouritzinbranching} in the case of the Heston model that as $n$ increases, the weighted simulation algorithm seems to perform better relatively to other algorithms.
This observation also seems to hold in the case of the 3/2 model.

For parametrizations that satisfy $\frac{4\tilde\kappa\tilde\theta}{\tilde\varepsilon^2} \in \mathbb N$, such as in Figure \ref{fig:MSE35}, we observe that the weighted simulation algorithm is at least as precise, and often more, than the other algorithms.
In this case, all the weights $\hat L_T$ are even, which tends to decrease the variance of the price estimator and thus, to decrease the relative MSE. 
It is also interesting to note that in the case of Figure \ref{fig:MSE35}, since $\frac{4\tilde\kappa\tilde\theta}{\tilde\varepsilon^2} \in \mathbb N$, it is not necessary to simulate $\tau^{(j)}_\delta$ and the trajectories $\hat L_T$.
Indeed, in this case, it is possible to simplify the algorithm using Proposition \ref{prop:solution_condition_C}, which tends to speed it up.

The run times presented in Table \ref{tab:times} show that in general, our method is slower than Milstein's, but faster than 
the quadratic exponential approximation of \cite{andersen2007efficient}. 
While the run times of the two benchmarks we consider are somewhat constant across the different parametrizations we tested, the speed of our method depends on a two factors; the number of Ornstein-Uhlenbeck processes $n$ to simulate and whether or not $\frac{4\tilde\kappa\tilde\theta}{\tilde\varepsilon^2} \in \mathbb N$.
This second factor explains the minor differences between the run times reported for PS2 and PS3. 
However, it should be noted that simulating the weights $\hat L$ is not particularly time consuming, as they are obtained as a deterministic function of $\hat U$ and therefore require no additional simulation.
The most significant difference in run times is due to $n$; for example, it takes twice as long to obtain a price estimate using PS5 ($n=12$,$\frac{4\tilde\kappa\tilde\theta}{\tilde\varepsilon^2} \notin \mathbb N$) than PS3 ($n=5$,$\frac{4\tilde\kappa\tilde\theta}{\tilde\varepsilon^2} \in \mathbb N$).
While our method is always faster than the one of \cite{andersen2007efficient} for the parametrizations studied, we expect that in certain cases (when $n$ is high), it could become slower.
Nonetheless, in those cases, our method should be very precise.

We also remark that, when it is used to simulate Heston prices and volatilities, Milstein's method can lead to poor accuracy, especially when the Feller condition is not respected.
In the 3/2 model, the Feller condition is always met, so it is normal to expect Milstein's algorithm to perform well.
Indeed, Figures \ref{fig:MSE24} and \ref{fig:MSE35} show that it reaches a similar level of precision as the other methods considered.

\begin{figure}[H]
	\centering	
	\begin{subfigure}[b]{0.32\textwidth}
		\includegraphics[width=\textwidth]{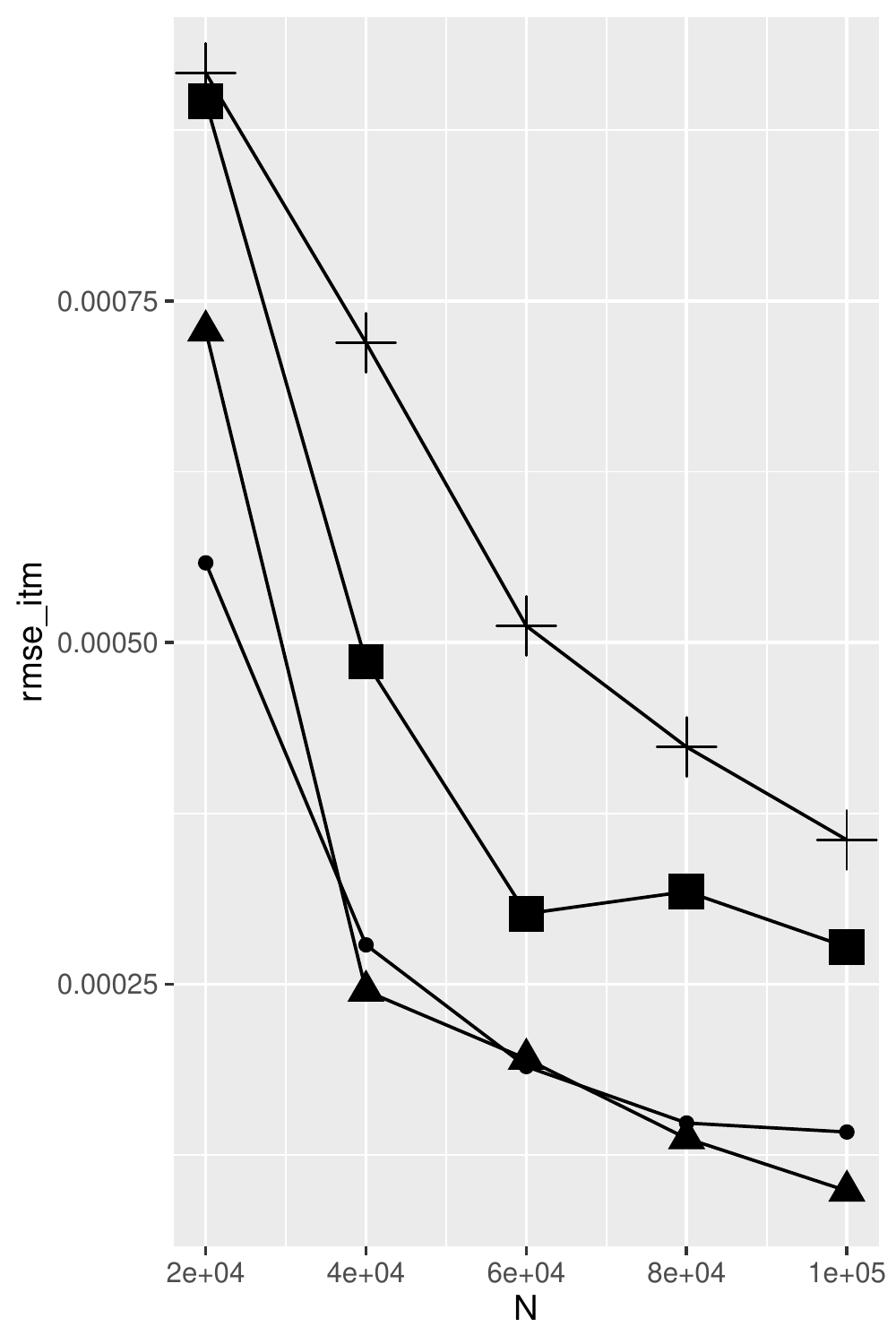}
		\subcaption{PS2, $K/S_0 = 0.95$}
		\label{fig:PS2itm}
	\end{subfigure}
	~
	\begin{subfigure}[b]{0.32\textwidth}
		\includegraphics[width=\textwidth]{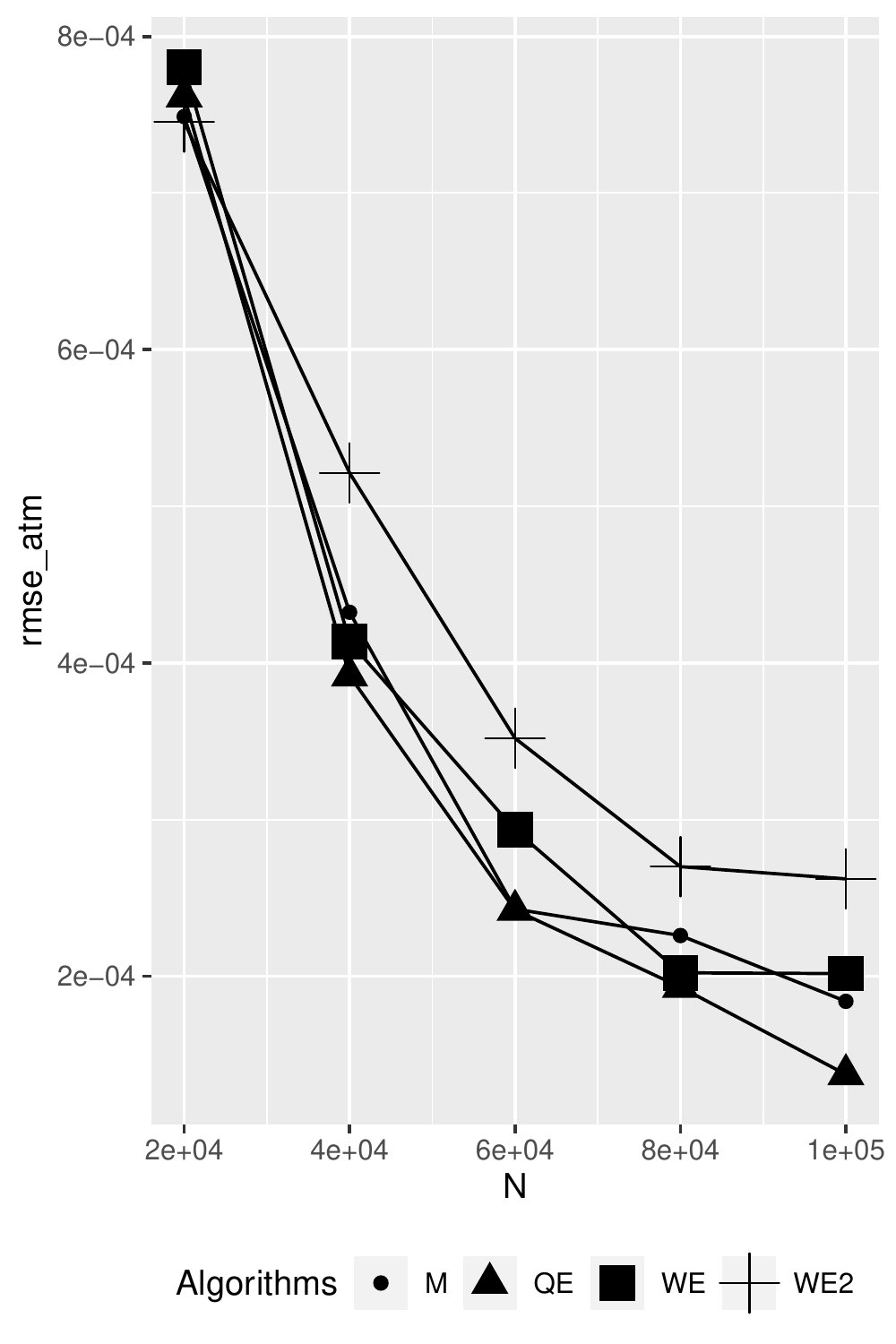}
		\subcaption{PS2, $K/S_0 = 1$}
		\label{fig:PS2atm}
	\end{subfigure}
	~
	\begin{subfigure}[b]{0.32\textwidth}
		\includegraphics[width=\textwidth]{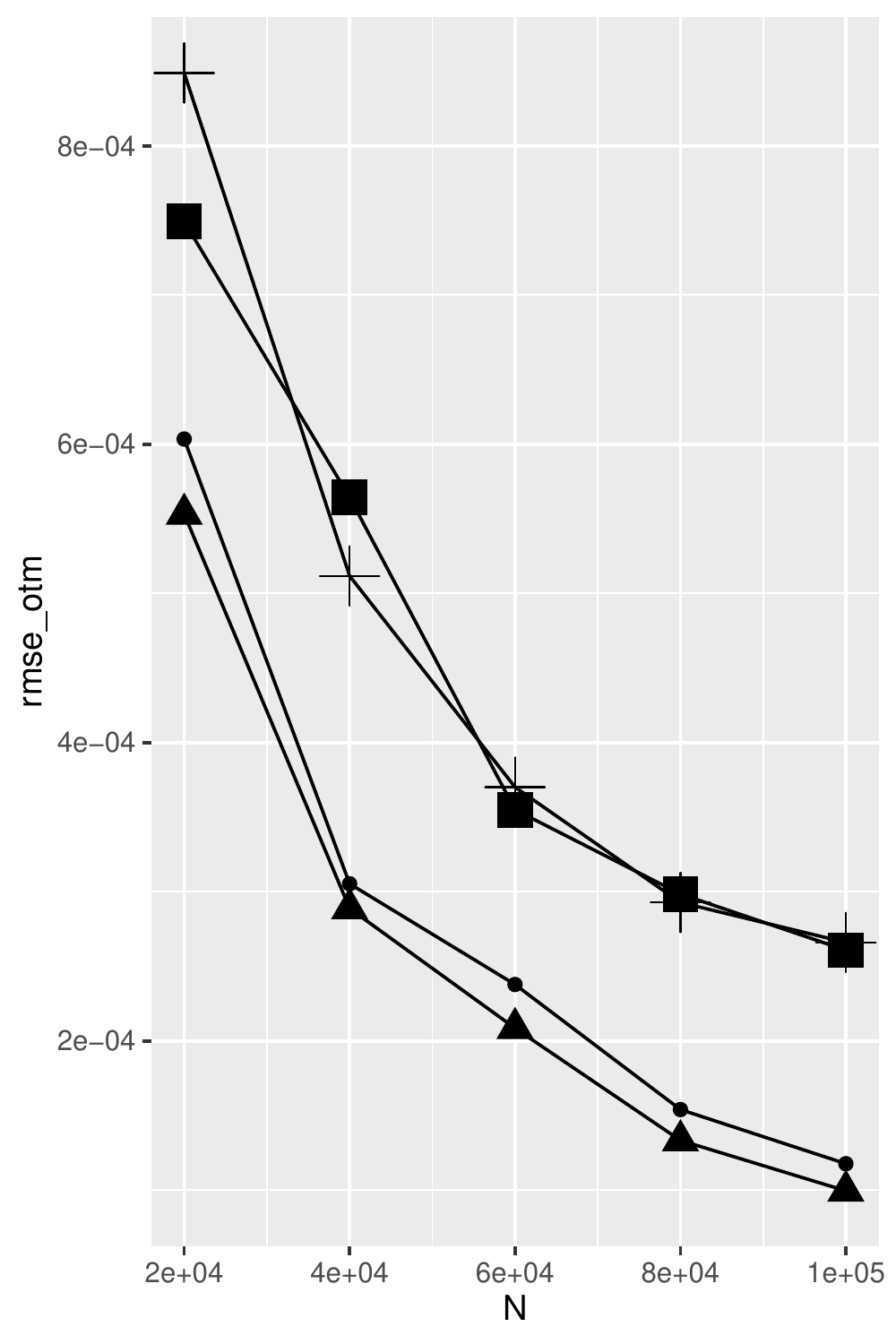}
		\subcaption{PS2, $K/S_0 = 1.05$}
		\label{fig:PS2otm}
	\end{subfigure}
	\\
	\begin{subfigure}[b]{0.32\textwidth}
		\includegraphics[width=\textwidth]{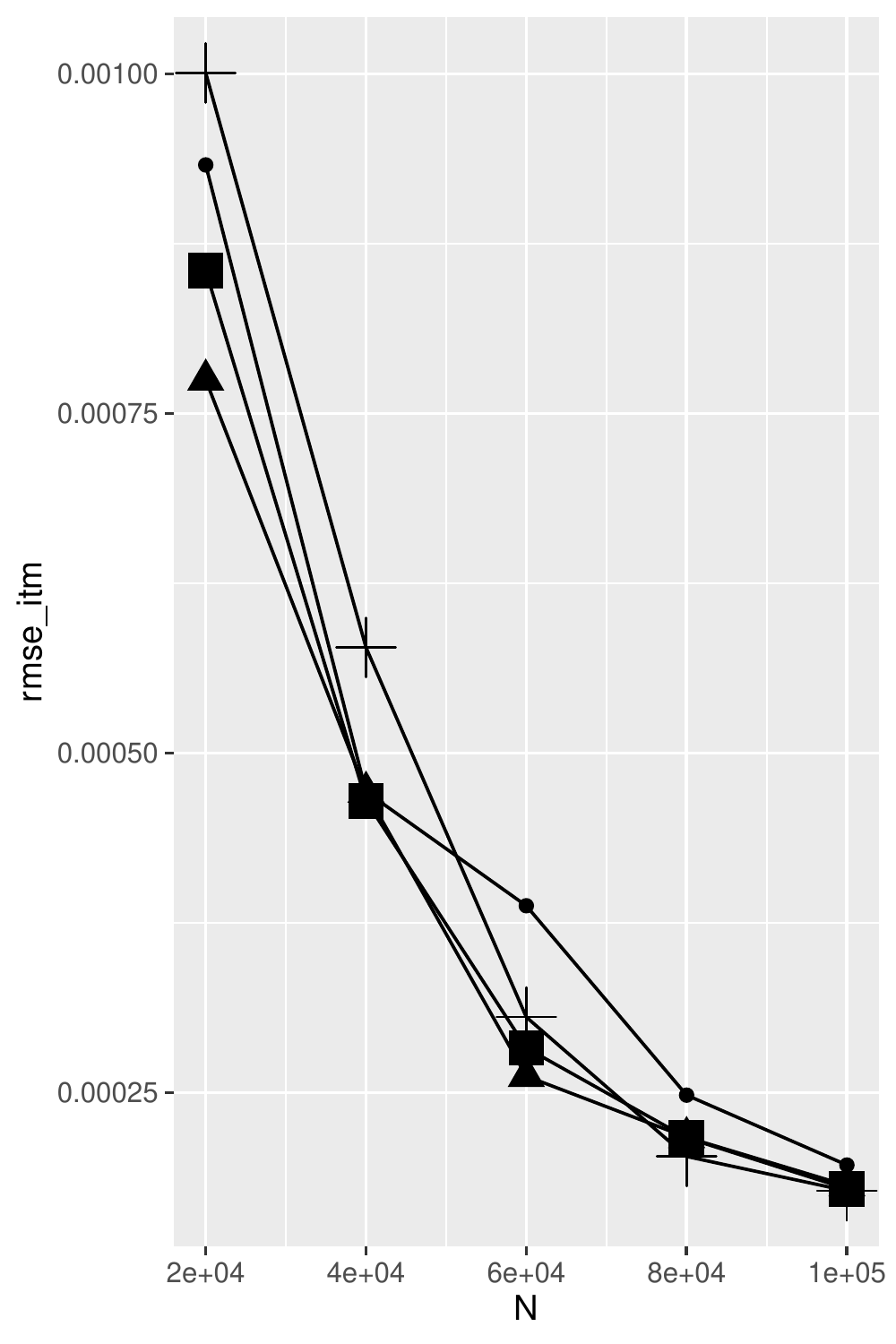}
		\subcaption{PS4, $K/S_0 = 0.95$}
		\label{fig:PS4itm}
	\end{subfigure}
	~
	\begin{subfigure}[b]{0.32\textwidth}
		\includegraphics[width=\textwidth]{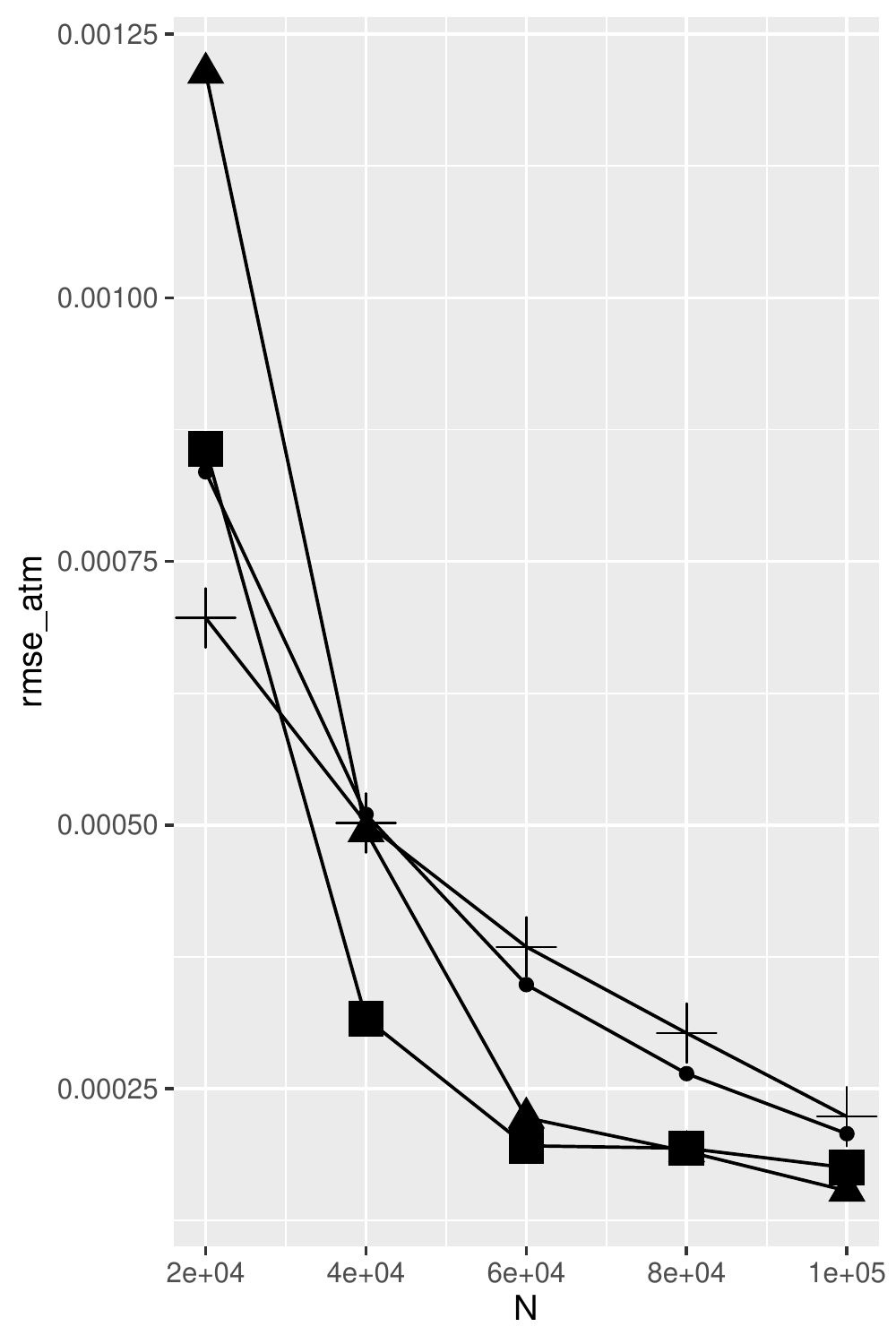}
		\subcaption{PS2, $K/S_0 = 1$}
		\label{fig:PS4atm}
	\end{subfigure}
	~
	\begin{subfigure}[b]{0.32\textwidth}
		\includegraphics[width=\textwidth]{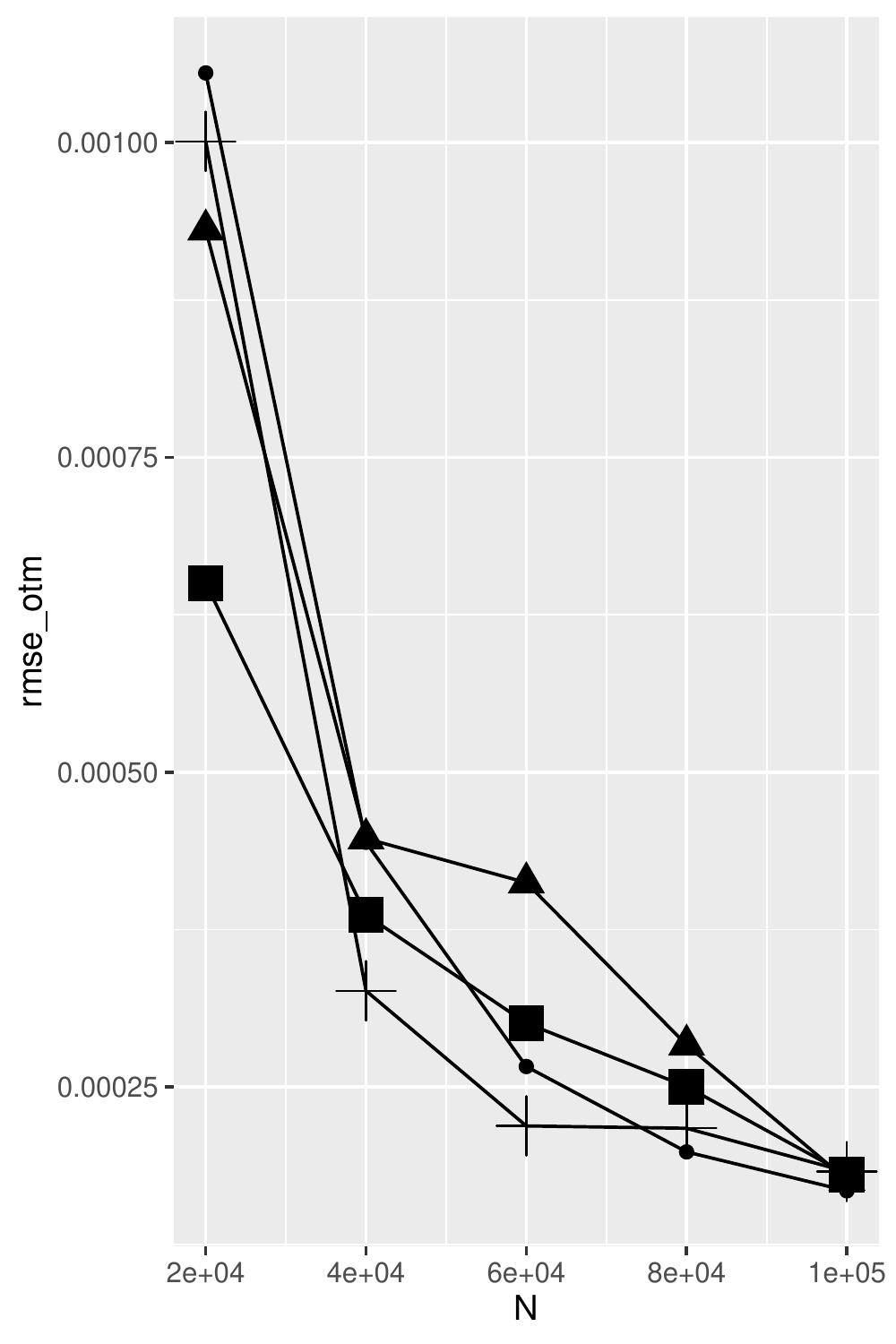}
		\subcaption{PS4, $K/S_0 = 1.05$}
		\label{fig:PS4otm}
	\end{subfigure}
	\caption{Relative MSE as a function of $N$, PS2 and PS4, algorithms: Milstein (dot), quadratic exponential (triangle), weighted, $M=2$ (square), weighted, $M=4$ (cross). }
	\label{fig:MSE24}
\end{figure}

\begin{figure}[H]
	\centering	
	\begin{subfigure}[b]{0.3\textwidth}
		\includegraphics[width=\textwidth]{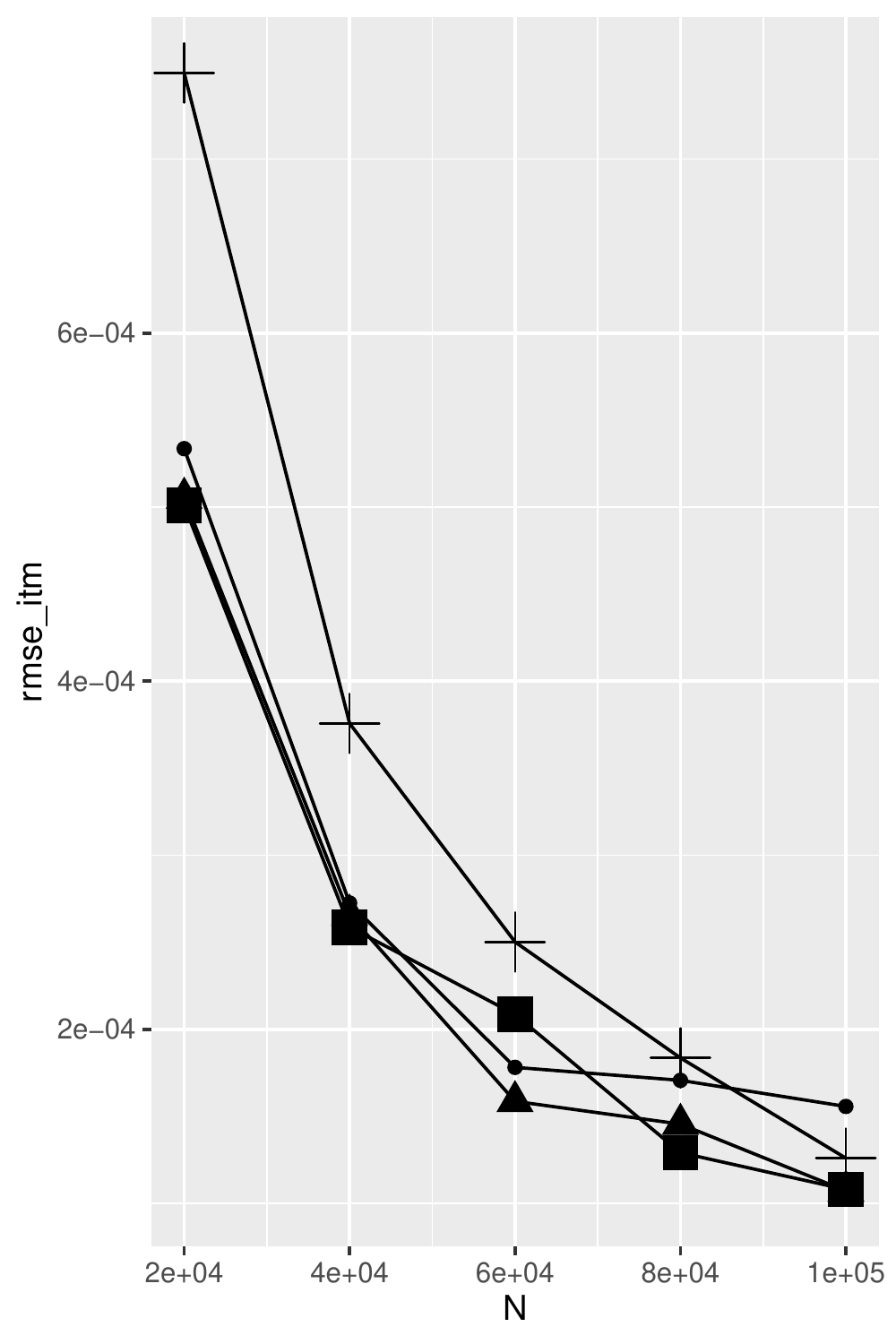}
		\subcaption{PS3, $K/S_0 = 0.95$}
		\label{fig:PS3itm}
	\end{subfigure}
	~
	\begin{subfigure}[b]{0.3\textwidth}
		\includegraphics[width=\textwidth]{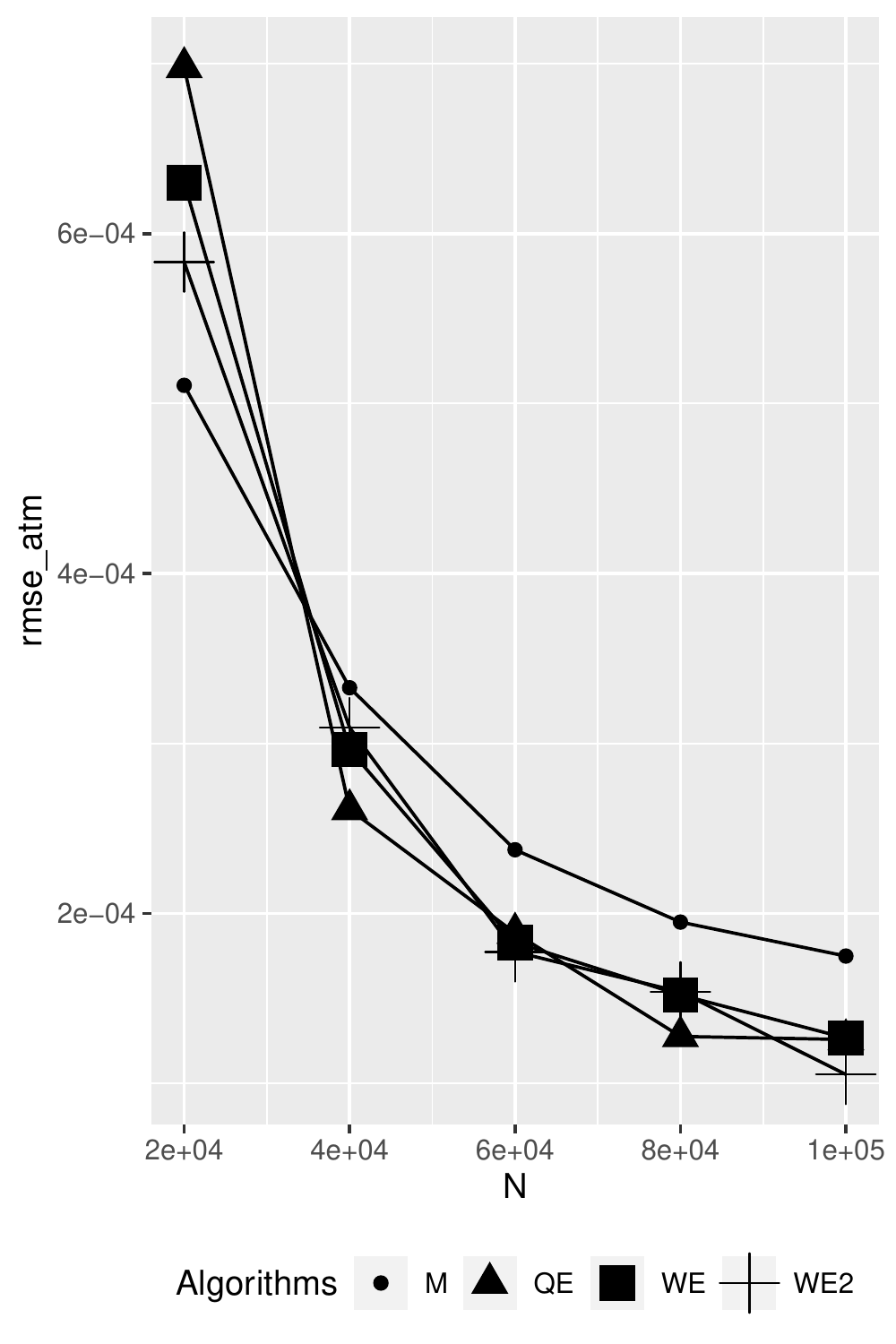}
		\subcaption{PS3, $K/S_0 = 1$}
		\label{fig:PS3atm}
	\end{subfigure}
	~
	\begin{subfigure}[b]{0.3\textwidth}
		\includegraphics[width=\textwidth]{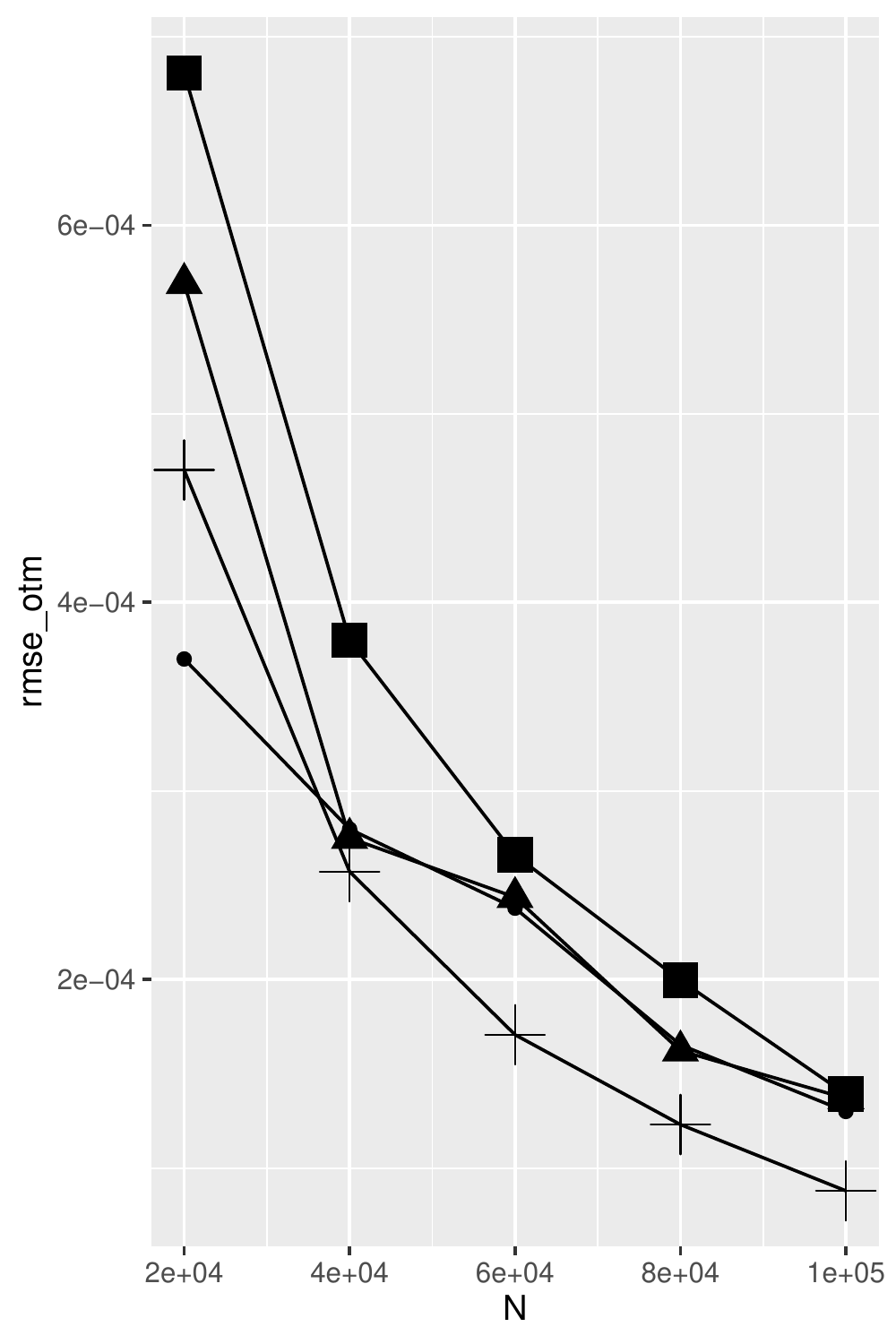}
		\subcaption{PS3, $K/S_0 = 1.05$}
		\label{fig:PS3otm}
	\end{subfigure}
	\\
	\begin{subfigure}[b]{0.3\textwidth}
		\includegraphics[width=\textwidth]{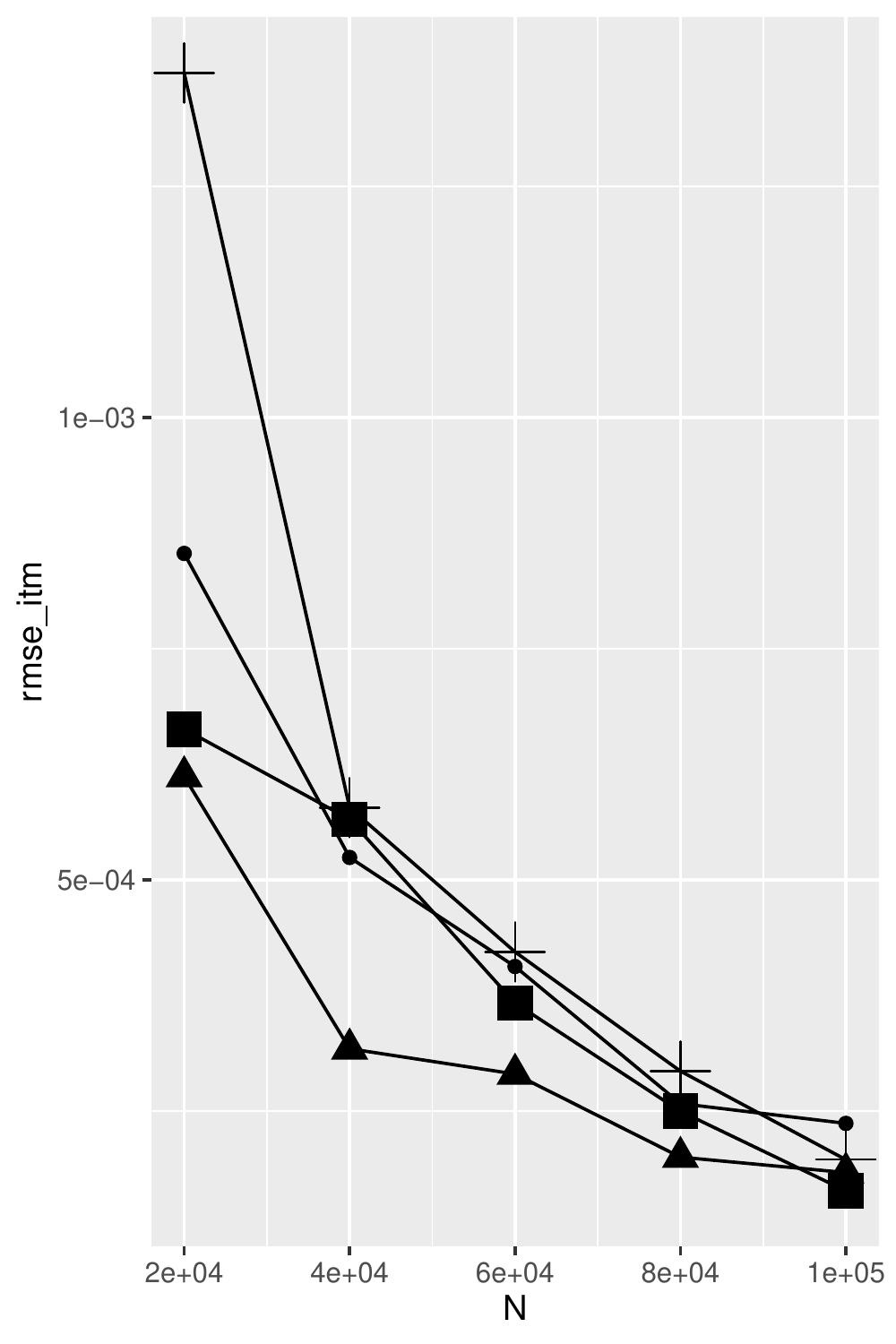}
		\subcaption{PS5, $K/S_0 = 0.95$}
		\label{fig:PS5itm}
	\end{subfigure}
	~
	\begin{subfigure}[b]{0.3\textwidth}
		\includegraphics[width=\textwidth]{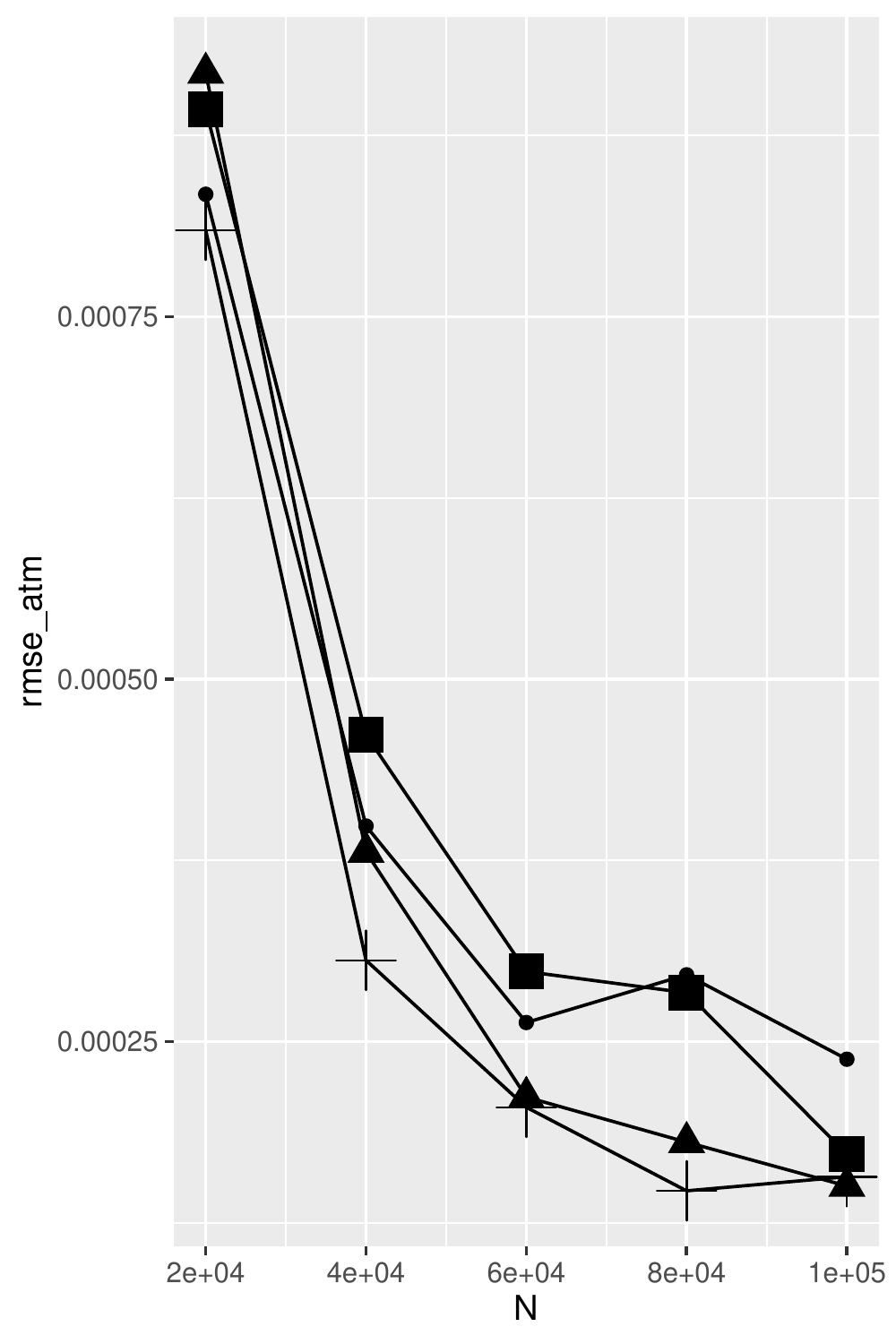}
		\subcaption{PS5, $K/S_0 = 1$}
		\label{fig:PS5atm}
	\end{subfigure}
	~
	\begin{subfigure}[b]{0.3\textwidth}
		\includegraphics[width=\textwidth]{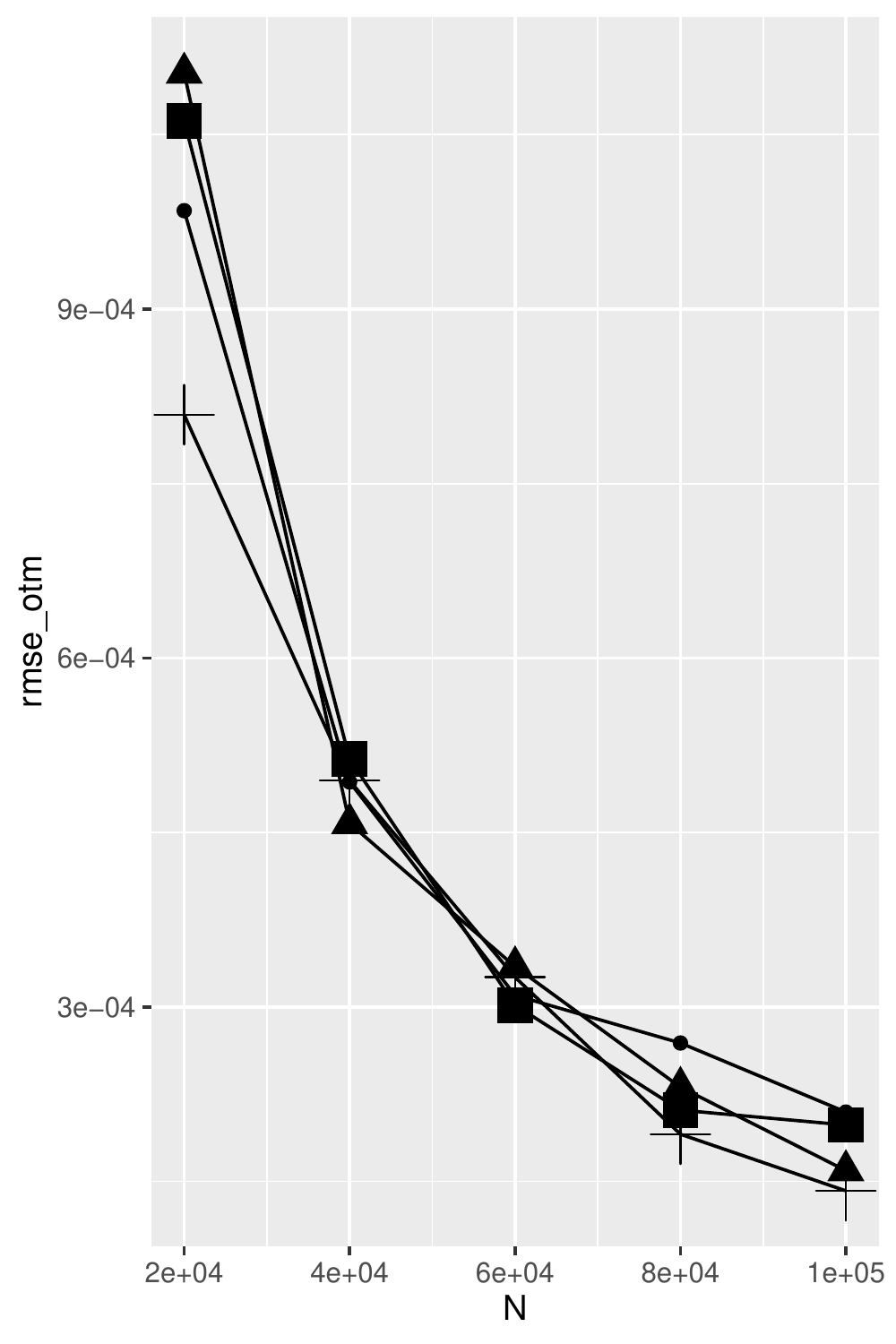}
		\subcaption{PS5, $K/S_0 = 1.05$}
		\label{fig:PS5otm}
	\end{subfigure}
	\caption{Relative MSE as a function of $N$, PS3 and PS5, algorithms: Milstein (dot), quadratic exponential (triangle), weighted, $M=2$ (square), weighted, $M=4$ (cross). }
	\label{fig:MSE35}
\end{figure}

\begin{table}[H]
	\begin{center}
		\caption{Run times (in seconds), $N=5 \times 10^5$, $M=2$.}
		\label{tab:times}
		\begin{tabular}{cccc}
			\hline
			\multirow{2}{*}{Parameters}   & 
			\multirow{2}{*}{Milstein}  & Quadratic  & \multirow{2}{*}{Weighted} \\
			 & & 
			 exponential & \\
			\hline
			PS2  
			& 0.445 & 2.222 & 0.712 \\
			PS3  
			& 0.446 & 2.224 & 0.615 \\
			PS4  
			& 0.444 & 2.226 & 1.314 \\
			PS5  
			& 0.444 & 2.222 & 1.312 \\
			\hline      
		\end{tabular}
	\end{center}
\end{table}

\section{Conclusion}

In this paper, we present a weak explicit solution to the 3/2 model, up until the inverse of the variance process drops below a given threshold.
We develop a simulation algorithm based on this solution and show that it can be used to price options in the 3/2 model, since in practice, the inverse variance process stays away from 0.
Numerical examples show that our simulation algorithm performs at least as well as popular algorithms presented in the literature. Precision is improved when the parameters satisfy $\frac{4\tilde\kappa\tilde\theta}{\tilde\varepsilon^2} \in \mathbb N$ and when $n$ is larger.
We also show that it is significantly faster than the quadratic exponential approximation of \cite{andersen2007efficient} to the method of \cite{BroadieKaya:2006}, which is generally considered to present a good balance between precision and computation time.

It is important to note that the method that we present in this paper could be significantly sped up by the use of sequential resampling, as implemented in \cite{kouritzinbranching} for the Heston model.
Such improvements, left for future work, could give a significant advantage to our weighted simulation algorithm for the 3/2 model.

\section*{Appendix}

This section presents the simulation algorithms used to produce the numerical examples in Section \ref{sec:NumericalResults}.
Algorithm \ref{algo:simulation} stems from the results we present Theorem \ref{thm:solution}.
Algorithm \ref{algo:milstein} is a Milstein-type algorithm applied to the 3/2 model.
Algorithm \ref{algo:andersen} is \cite{andersen2007efficient}'s approximation to the algorithm proposed by \cite{BroadieKaya:2006}, modified for the 3/2 model, since the original algorithm was developed for the Heston model.
Algorithms \ref{algo:milstein} and \ref{algo:andersen} are considered for comparison purposes.

For all algorithms, we consider a partition $\{0,h,2h,\ldots,mh\}$, with $mh=T$ of the time interval $[0,T]$, and outline the simulation of $N$ paths of $(\hat S, \hat U, \hat L)$, as well as the associated stopping times $\tau_\delta$.

To simplify the exposition of Algorithm \ref{algo:simulation}, we define the following constants:
\begin{align*}
\alpha_h = e^{-\frac{\tilde\kappa}{2}h}, \qquad\qquad 
\sigma_h = \frac{\tilde\varepsilon^2}{4\tilde\kappa}\left(1-e^{-h\tilde\kappa}\right).
\end{align*}
We also drop the hats to simplify the notation.

\begin{myalg}[Weighted explicit simulation]\label{algo:simulation}\ \\
		\textbf{I. Initialize:}	
		\begin{itemize}
		\item[]
		Set the starting values for each simulated path:
		\begin{align*}
			\{(S_0^{(j)},L_0^{(j)},\tau_\delta^{(j)}) = (S_0,1,T+h)\}_{j=1}^N, \,
			\{Y_0^{(l,j)} = \sqrt{U_0/n}\}_{l,j=1}^{n,N}
		\end{align*}
		\end{itemize}
		\smallskip
		
		\noindent
		\textbf{II. Loop on time:} for $i= 1,\ldots, m$
		\begin{itemize}
			\item[]
			\textbf{Loop on particles:} for $j = 1,\ldots,N$, do
			\begin{enumerate}[label=(\arabic*)]
				\setlength\itemsep{7pt}
				\item 
				For $l = 1,\ldots, n$, generate $Y^{(l,j)}_{ih}$ using
				$Y^{(l,j)}_{ih} \sim N\left(\alpha_h Y^{(l,j)}_{(i-1)h},\sigma^2_h\right).$
				
				\item
				Set 
				$U^{(j)}_{ih} = \sum_{l=1}^n (Y^{(l,j)}_{ih})^2$.
				
				\item
				Let $IntU^{(j)}_{ih} \approx \int_{(i-1)h}^{ih}  (U_s^{(j)})^{-1} \d s$ using \eqref{eq:int_U} (or another quadrature rule).
				
				\item
				Generate $S^{(j)}_{ih}$ from $S^{(j)}_{(i-1)h}$ using \eqref{eq:hat_S}, with ${\int_{ih}^{(i-1)h} \hat (U_s^{(j)})^{-1/2} dW^{(2)}_s \sim N(0,IntU^{(j)}_{ih})}$.
				
				\item
				If $ih \leq \tau^{(j)}_\delta$,
				\begin{enumerate}[label=(\roman*)]
					\item
					If $U^{(j)}_{ih} > \delta$, generate $L^{(j)}_{ih}$ from $L^{(j)}_{(i-1)h}$ using \eqref{eq:L_th}.
					\item
					Otherwise, set $\tau^{(j)}_\delta = t$.
				\end{enumerate}
			\end{enumerate}
		\end{itemize}	
\end{myalg}

\begin{myalg}[Milstein]\label{algo:milstein}\ \\
	\textbf{I. Initialize:}	
	\begin{itemize}
		\item[]
		Set the starting values for each simulated path:
		\begin{align*}
		\{(S_0^{(j)},U_0^{(j)}) = (S_0,U_0\}_{j=1}^N
		\end{align*}
	\end{itemize}
	\smallskip
	
	\noindent
	\textbf{II. Loop on time:} for $i= 1,\ldots, m$
	\begin{itemize}
		\item[]
		\textbf{Loop on particles:} for $j = 1,\ldots,N$, do
		\begin{enumerate}[label=(\arabic*)]
			\item
			Correct for possible negative values:
			$\bar{u}^{(j)} = \max(U((i-1)h),0)$
			\item
			Generate $U^{(j)}_{ih}$ from $U^{(j)}_{(i-1)h}$:
			\begin{align*}
			U^{(j)}_{ih} &= U^{(j)}_{(i-1)h} + \tilde\kappa (\tilde \theta - \bar u^{(j)}) h \\
			&\qquad + \tilde \epsilon \sqrt{\bar u^{(j)} h} Z^{(j)}_1 + \frac 14 \tilde \varepsilon^2 ((Z^{(j)}_1)^2 -1)h,
			\end{align*}
			with $Z^{(j)}_1 \sim N(0,1)$.
			\item
			Generate $S^{(j)}_{ih}$ from $S^{(j)}_{(i-1)h}$:
			\begin{align*}
			S_{ih} = S_{(i-1)h} \exp\left\{\left(r-\frac 1{2\bar u}\right)h + \sqrt{\frac h {\bar{u}}}Z^{(j)}_2\right\},
			\end{align*} 
			with $Z^{(j)}_2 \sim N(0,1)$.
		\end{enumerate}
	\end{itemize}
\end{myalg}

\begin{myalg}[Quadratic exponential]\label{algo:andersen}\ \\
	\textbf{I. Initialize:}	
\begin{enumerate}[label=(\arabic*)]
	\item
	Set the starting values for each simulated path:
	\begin{align*}
	\{(S_0^{(j)},U_0^{(j)}) = (S_0,U_0)\}_{j=1}^N
	\end{align*}
	\item
	Fix the constant $\phi_c \in [1,2]$.
\end{enumerate}
\smallskip

\noindent
\textbf{II. Loop on time:} for $i= 1,\ldots, m$
\begin{itemize}
	\item[]
	\textbf{Loop on particles:} for $j = 1,\ldots,N$, do
	\begin{enumerate}[label=(\arabic*)]
		\item
		Set the variables $m_{i,j}$ and $s_{i,j}$:
		\begin{align*}
		m_{i,j} &= \tilde\theta + (U^{(j)}_{(i-1)h}-\tilde\theta)e^{-\tilde\kappa h}\\
		s_{i,j} &= \frac{U_{(i-1)h}\tilde\varepsilon^2 e^{-\tilde\kappa h}}{\tilde\kappa}(1-e^{-\tilde\kappa h})
		+ \frac{\tilde\theta \tilde\varepsilon^2} {2\tilde\kappa}(1-e^{-\tilde\kappa h})^2
		\end{align*} 
		\item
		Set $\phi_{i,j} = \frac{s^2_{i,j}}{m^2_{i,j}}$.
		\item
		If $\phi_{i,j} < \phi_c$,
			\item[] 
			Generate $U^{(j)}_{ih}$ from $U^{(j)}_{(i-1)h}$:
			\begin{align*}
			U^{(j)}_{ih} = a_{i,j}(b_{i,j}+Z^{(j)})^2,
			\end{align*} 
			where $Z^{(j)} \sim N(0,1)$ and 
			\begin{align*}
			b^2_{i,j} &= 2\phi_{i,j}^{-1} -1+ \sqrt{2\phi_{i,j}^{-1}}\sqrt{2\phi_{i,j}^{-1}-1}\\
			a_{i,j} &= \frac{m_{i,j}}{1+b^2_{i,j}}
			\end{align*}
		\item
		If $\phi_{i,j} \geq \phi_c$,
			\item[]
			Generate $U^{(j)}_{ih}$ from $U^{(j)}_{(i-1)h}$:
			\begin{align*}
			U^{(j)}_{ih} = \frac 1\beta \log \left(\frac{1-p_{i,j}}{1-X^{(j)}}\right),
			\end{align*} 
			where $X^{(j)} \sim Uniform(0,1)$ and 
			\begin{align*}
			p_{i,j} = \frac{\psi_{i,j}-1}{\psi_{i,j}+1}, \qquad
			\beta_{i,j} = \frac{1-p_{i,j}}{m_{i,j}}
			\end{align*}
		\item
		Let $IntU^{(j)}_{ih} \approx \int_{(i-1)h}^{ih}  (U_s^{(j)})^{-1} \d s$ using \eqref{eq:int_U}.
		
		\item
		Generate $S^{(j)}_{ih}$ from $S^{(j)}_{(i-1)h}$ using \eqref{eq:hat_S}, with ${\int_{ih}^{(i-1)h} \hat (U_s^{(j)})^{-1/2} dW^{(2)}_s \sim N(0,IntU^{(j)}_{ih})}$.
		
	\end{enumerate}
\end{itemize}	
\end{myalg}

\bibliographystyle{abbrvnat}
\bibliography{Simulation_32}

\end{document}